\newtheorem{proposition}{Proposition}
\newtheorem*{acknowledgments}{Acknowledgments}
\theoremstyle{definition}
\newtheorem{definition}{Definition}
\newtheorem{ex}{Example} % The '*' makes it unnumbered
\newtheorem{conjecture}{Conjecture}
\newtheorem{remark}{Remark}
\newcommand{\R}{\mathbb{R}}
\newcommand{\Z}{\mathbb{Z}}
\newcommand{\Tt}{\mathcal{T}}
\newcommand{\Tw}{T\!w}
\renewcommand{\Wr}{W\!r}
\newcommand{\doubleu}{w}
\begin{document}

%  Headings
%
%\renewcommand{\evenhead}{S.Kaji, K.Kajiwara, H.Park}
%\renewcommand{\oddhead}{Linkage mechanisms governed by integrable discrete curve deformation}

%  Titlepage
%
%\thispagestyle{empty}

%\{Linkage mechanisms governed by integrable deformations of discrete space curves}

%\Author{Shizuo Kaji~$^a$,
%%\thanks{The first named author was partially supported by  JST PRESTO Grant Number JPMJPR16E3, Japan.},
%Kenji Kajiwara~$^a$, and Hyeongki Park$^b$}
%
%\Address{$^a$  Institute of Mathematics for Industry, Kyushu University\\
%744 Motooka, Nishi-ku, Fukuoka, Japan, 819-0395 \\
%$^b$ Graduate School of Mathematics, Kyushu University \\
%744 Motooka, Nishi-ku, Fukuoka, Japan, 819-0395
%}
\begin{center}
\begin{Large}
Linkage Mechanisms Governed by Integrable Deformations of Discrete Space Curves\\[5mm]
\end{Large}
\begin{normalsize}
Shizuo {\sc Kaji}$^1$, Kenji {\sc Kajiwara}$^2$\\[1mm]
Institute of Mathematics for Industry, Kyushu University\\
744 Motooka, Fukuoka 819-0395, Japan\\
e-mail: {}$^1$skaji@imi.kyushu-u.ac.jp,\ 
%\hskip34pt 
$^2$kaji@imi.kyushu-u.ac.jp\\[3mm]
Hyeongki {\sc Park}\\[1mm]
Graduate School of Mathematics, Kyushu University\\
744 Motooka, Fukuoka 819-0395, Japan\\
e-mail: h-park@math.kyushu-u.ac.jp\\[2mm]
\end{normalsize}
\end{center}
\begin{abstract}
  \noindent
  A \emph{linkage mechanism} consists of rigid bodies assembled by joints which can be used to translate and transfer motion
  from one form in one place to another.
%  Mathematical studies of linkage investigate the configuration space,
%  which is a topological space consisting of all possible states of a given linkage.
  In this paper, we are particularly interested in a family of spatial linkage mechanisms
  which consist of $n$-copies of a rigid body joined together by hinges to form a ring.
Each hinge joint has its own axis of revolution and rigid bodies joined to it can be freely rotated around the axis.
The family includes the famous threefold symmetric Bricard 6R linkage, also known as the Kaleidocycle,
 which exhibits a characteristic ``turning-over'' motion.
We can model such a linkage as a discrete closed curve in $\mathbb{R}^3$ of constant torsion up to sign.
Then, its motion is described as the deformation of the curve preserving torsion and arc length.
We describe certain motions of this object that are governed by the semi-discrete mKdV and sine-Gordon equations,
where infinitesimally the motion of each vertex is confined in the osculating plane.
\end{abstract}

%%%%%%%%%%%%%%%%%%%%%%%%%%%%%%%%%%%%%%%%%%%%%%%%%%
%\red{TODO for the future}
%\begin{itemize}
%\item  use Sym-Tafel to construct a constant torsional curve with $\kappa=sin(3t\pi)$
%\item \cite[\S 5]{Calini-Ivey1999} seems to coincide with the limit Kaleidocycle, where the centreline evolves by rigid motion.
%\item Variation of the area of a spherical triangle?
%\item \red{Even $N$?}
%\end{itemize}
%%%%%%%%%%%%%%%%%%%%%%%%%%%%%%%%%%%%%%%%%%%%%%%%%%

%%%
% !TEX root = linkage_deform.tex
\section{Introduction}
A linkage is a mechanical system consisting of rigid bodies (called \emph{links}) joined together by
\emph{joints}.  They are used to transform one motion to another as in the famous Watt parallel
motion and a lot of examples are found in engineering as well as in natural creatures (see, for
example, \cite{Chen2011}).

Mathematical study of linkage dates back to Euler, Chebyshev, Sylvester, Kempe, and Cayley and since
then the topology and the geometry of the configuration space have attracted many researchers (see
\cite{Faber2008,Kapovich-Millson2002,Magalhaes} for a survey).  Most of the research focuses on
\emph{pin joint linkages}, which consist of only one type of joint called pin joints.  A pin joint
constrains the positions of ends of adjacent links to stay together.  To a pin joint linkage we can
associate a graph whose vertices are joints and edges are links, where edges are assigned its
length.  The state of a pin joint linkage is effectively specified by the coordinates of the joint
positions, where the distance of two joints connected by a link is constrained to its length.  Thus,
its configuration space can be modelled by the space of isometric imbeddings of the corresponding
graph to some Euclidean space.  Note that in practice, joints and links have sizes and they collide
to have limited mobility, but here we consider ideal linkages with which joints and links can pass
through each other.

While the configuration spaces of (especially planar) pin joint linkages are well studied, there are
other types of linkages which are not so popular.  
% chapter <=> paper
In this chapter, we are mainly interested in
linkages consisting of hinges (revolute joints).  To set up a framework to study linkages with
various types of joints, we first introduce a mathematical model of general linkages as graphs
decorated with groups (\S \ref{sec:linkage-graph}), extending previous approaches (see
\cite{Muller2015} and references therein).  This formulation can be viewed as a special type of
constraint network (e.g., \cite{Freuder}).  Then in \S \ref{sec:hinged-linkage}, we focus on
linkages consisting of hinges.  Unlike a pin joint which constrains only the relative positions of
connected links, a hinge has an axis so that it also constrains the relative orientation of
connected links.

We are particularly interested in a simple case when $n$ links in $\R^3$ are joined by hinges to
form a circle (\S \ref{circle-system}).  Such a linkage can be roughly thought of as a discrete
closed space curve, where hinge axes are identified with the lines spanned by the binormal vectors.
Properties of such linkages can thus be translated and stated in the language of discrete curves.  An
example of such linkage is the threefold symmetric Bricard 6R linkage consisting of six hinges
(Fig. \ref{fig:bricard}), which exhibits a turning-over motion and has the configuration space
homeomorphic to a circle.  As a generalisation to the threefold symmetric Bricard 6R linkage, we
consider a family of linkages consisting of copies of an identical links connected by hinges, which
we call \emph{Kaleidocycles}, and they are characterised as discrete curves of constant speed and
constant torsion.

\begin{figure}[ht]
\center
\includegraphics[width=5cm]{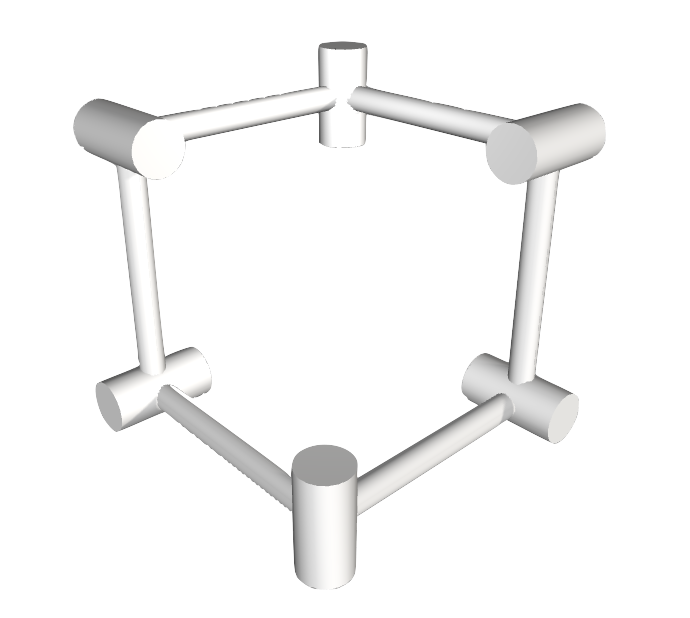}
\caption{Threefold symmetric Bricard 6R linkage.}
\label{fig:bricard}
\end{figure}

The theory of discrete space curves has been studied by many authors.  The simplest way to
discretise a space curve is by a polygon, that is, an ordered sequence of points
$\gamma=(\gamma_0,\gamma_1,\ldots), \gamma_i \in \R^3$.  Deformation of a curve is a
time-parametrised sequence of curves $\gamma(t)$, where $t$ runs through (an interval of) the real
numbers. Deformations of a given smooth/discrete space curve can be described by introducing an
appropriate frame such as the Frenet frame, which satisfies the system of linear partial
differential/differential-difference equations. The compatibility condition gives rise to nonlinear
partial differential/differential-difference equation(s), which are often integrable.  It is
sometimes possible to construct deformations for the space curves using integrable systems which
preserve some geometric properties of the space curve such as length, curvature, and torsion. For
example, a deformation is said to be \emph{isoperimetric} if the deformation preserves the arc
length.  In this case, the modified Korteweg-de Vries (mKdV) or the nonlinear Schr\"odinger (NLS) equation
and their hierarchies naturally arise as the compatibility condition
\cite{Calini-Ivey1999,Doliwa-Santini:PLA,Hasimoto,inoguchi2014discrete,Lamb,Langer-Perline,Nakayama_Segur_Wadati:PRL,Rogers-Schief:book}.
Various continuous deformations for the discrete space curves have been studied in
\cite{Doliwa-Santini:JMP,Hisakado-Wadati,inoguchi2014discrete,Nakayama:JPSJ2007,Nishinari}, where
the deformations are described by the differential-difference analogue of the mKdV and the NLS
equations.

The motion of Kaleidocycles corresponds to isoperimetric and torsion-preserving deformation of
discrete closed space curves of constant torsion.  In \S \ref{sec:motion}, we define a flow on the
configuration space of a Kaleidocycle by the differential-difference analogue of the mKdV and the sine-Gordon equations
(\textit{semi-discrete mKdV and sine-Gordon equations}).  This flow generates the characteristic turning-over motion
of the Kaleidocycle.

Kaleidocycles exhibit interesting properties and pose some topological and geometrical questions.
In \S \ref{sec:final} we indicate some directions of further study to close this exposition.

\medskip 
We list some more preceding works in different fields which are relevant to our topic in some ways.

Mobility analysis of a linkage mechanism studies how many degrees of freedom a particular state of
the linkage has, which corresponds to determination of the local dimension at a point in the
configuration space (see, for example, \cite{Muller2016}).  On the other hand, rigidity of linkages
consisting of hinges are studied in the context of the body-hinge framework (see, for example,
\cite{jordan,katoh}).  The main focus of the study is to give a characterisation for a generic
linkage to have no mobility.  That is, the question is to see if the configuration space is homeomorphic to a
point or isolated points.

Sato and Tanaka \cite{Sato-Tanaka} study the motion of a certain linkage mechanism with a constrained degree of freedom
and observed that soliton solutions appear.

Closed (continuous) curves of constant torsion have attracted sporadic interest of geometers, e.g.,
\cite{bates2013,Calini-Ivey1996, Ivey, Weiner1974,Weiner1977}. 
In particular, \cite{Calini-Ivey1999}
discusses an evolution of a constant torsion curve governed by a sine-Gordon equation in the
continuous setting. 
%%%%%%%%%%%%%%

% !TEX root = linkage_deform.tex
\section{A mathematical model of linkage}
The purpose of this section is to set up a general mathematical model of linkages. This section is
almost independent of later sections, and can be skipped if the reader is concerned only with our
main results on the motion of Kaleidocycles.
\subsection{A group theoretic model of linkage}\label{sec:linkage-graph}
We define an abstract linkage as a decorated graph,
and its realisation as a certain imbedding of the graph in a Euclidean space.
Our definition generalises the usual graphical model of a pin joint linkage
to allow different types of joints.

Denote by $\mathrm{\it SO}(n)$ the group of orientation preserving linear isometries of 
the $n$-dimensional Euclidean space $\R^n$.
An element of $\mathrm{\it SO}(n)$ is identified with a sequence of $n$-dimensional column vectors 
$[ f_1,f_2,\ldots,f_n ]$
which are mutually orthogonal and have unit length with respect to the standard inner product 
$\langle x, y \rangle$ of $x,y\in \R^n$.
Denote by $\mathrm{\it SE}(n)$ the group of $n$-dimensional orientation preserving Euclidean transformations.
That is, it consists of the affine transformations $\R^n\to \R^n$ which preserves the orientation and the standard metric.
We represent the elements of $\mathrm{\it SE}(n)$ by $(n+1)\times (n+1)-$homogeneous matrices acting on 
\[
  \R^n \simeq \{ {}^t(x_1,x_2,\ldots,x_n,1)\in \R^{n+1}\}
\]
by multiplication from the left.
For example, an element of $\mathrm{\it SE}(3)$ is represented by a matrix
\[
 \begin{pmatrix}
 a_{11} & a_{12} & a_{13} & l_1 \\
 a_{21} & a_{22} & a_{23} & l_2 \\
 a_{31} & a_{32} & a_{33} & l_3 \\
 0 & 0 & 0 & 1 
 \end{pmatrix}.
\]
The vector $l={}^t(l_1,l_2,l_3)$ is called the translation part.
The upper-left $3\times 3$-block of $A$ is called the linear part and denoted by $\bar{A}\in \mathrm{\it SO}(3)$.
Thus, the action on $v\in \R^3$ is also written by $v\mapsto \bar{A}v + l$.

\begin{definition}
An $n$-dimensional \emph{abstract linkage} $L$ consists of the following data:
\begin{itemize}
  \item a connected oriented finite graph $G=(V,E)$ 
  \item a subgroup $J_v\subset \mathrm{\it SE}(n)$ assigned to each $v\in V$, which defines the joint symmetry
  \item an element $C_e\in \mathrm{\it SE}(n)$ assigned to each $e\in E$, which defines the link constraint.
\end{itemize}
\end{definition}
In practical applications, we are interested in the case when $n=2$ or $3$.
When $n=2$ linkages are said to be \emph{planar}, and when $n=3$ linkages are said to be \emph{spatial}.

We say a linkage $L$ is \emph{homogeneous} if for any pair $v_1,v_2\in V$, the following conditions are satisfied:
\begin{itemize}
\item there exists a graph automorphism which maps $v_1$ to $v_2$
(i.e., $\mathrm{\it Aut}(G)$ acts transitively on $V$),
\item $J_{v_1}=J_{v_2}$,
\item and $C_{e_1}=C_{e_2}$ for any $e_1,e_2\in E$.
\end{itemize}

A \emph{state} or \emph{realisation} $\phi$ of an abstract linkage $L$ is an assignment 
of a coset to each vertex
\[
\phi: v\mapsto \mathrm{\it SE}(n)/J_v
\]
such that
for each edge $e=(v_1,v_2)\in E$, the following condition is satisfied:
\begin{equation}\label{eq:edge}
\phi(v_2) J_{v_2}\cap \phi(v_1) J_{v_1} C_e \neq \emptyset,
\end{equation}
where cosets are identified with subsets of $\mathrm{\it SE}(n)$.

Let us give an intuitive description of \eqref{eq:edge}.  Imagine a reference joint sitting at the
origin in a reference orientation.  The subset $\phi(v_1)J_{v_1}$ consists of all the rigid
transformations which maps the reference joint to the joint at $v_1$ with a specified position and
an orientation $\phi(v_1)$ up to the joint symmetry $J_{v_1}$.  The two subsets $\phi(v_2)J_{v_2}$
and $\phi(v_1)J_{v_1} C_e $ intersects if and only if the joint at $v_1$ can be aligned to that at
$v_2$ by the transformation $C_e$.

\begin{ex}
The usual pin joints $v_1,v_2$ connected by a bar-shaped link $e$ of length $l$ are represented by
$J_{v_1}=J_{v_2}=\mathrm{\it SO}(n)$ and $C_e$ being any translation by $l$.  Note that
$\mathrm{\it SE}(3)/J_{v_1}\simeq \R^3$.  It is easy to see that \eqref{eq:edge} amounts to
saying the difference in the translation part of $\phi(v_2)$ and $\phi(v_1)$ should have the norm
equal to $l$.

Two revolute joints (hinges) $v_1,v_2$ in $\R^3$ connected by a link $e$ of length $l$ making an
angle $\alpha$ are represented by $J_{v_1}=J_{v_2}$ being the group generated by rotations around
the $z$-axis and the $\pi$-rotation around the $x$-axis, and $C_e$ being the rotation by $\alpha$
around $x$-axis followed by the translation along $x$-axis by $l$; that is
\[
 J_{v_1}=J_{v_2}=
 \left\{ \begin{pmatrix}
   \cos\theta & \mp \sin\theta & 0 & 0 \\
   \sin\theta & \pm \cos\theta & 0 & 0 \\
   0 & 0& \pm 1 & 0 \\
   0 & 0 & 0 & 1
     \end{pmatrix} \middle| \theta\in \R \right\},
     \quad
  C_e=
  \begin{pmatrix}
   1  & 0 & 0 & l \\
   0 & \cos\alpha & -\sin\alpha & 0 \\
   0 & \sin\alpha & \cos\alpha & 0 \\
   0 & 0 & 0 & 1
     \end{pmatrix}.
\]
Note that $\mathrm{\it SE}(3)/J_{v_1}$ is the space of based lines (i.e., lines with specified
origins) in $\R^3$, and the line is identified with the axis of the hinge.
\end{ex}

The space $\overline{\mathcal{C}}(L)$ of all realisations of a given linkage $L$ admits an action of
$\mathrm{\it SE}(n)$ defined by $\phi\mapsto g\phi(v)$ for $g\in \mathrm{\it SE}(n)$.  The quotient
of $\overline{\mathcal{C}}(L)$ by $\mathrm{\it SE}(n)$ is denoted by $\mathcal{C}(L)$ and called the
\emph{configuration space} of $L$.  Each connected component of $\mathcal{C}(L)$ corresponds to the
mobility of the linkage $L$ in a certain state.  When a connected component is a manifold, its
dimension is what mechanists call \emph{the (internal) degrees of freedom} (DOF, for short).  Given
a pair of points on $\mathcal{C}(L)$, the problem of finding an explicit path connecting the points
is called \emph{motion planning} and has been one of the main topics in mechanics \cite{lavalle}.
In a similar manner, many questions about a linkage can be phrased in terms of the topology and the
geometry of its configuration space.

\begin{ex}
Consider the following spatial linkages consisting of pin joints depicted in Figure \ref{fig:wall}.
In the latter, we assume the two joints $a$ and $b$ are fixed to the wall.  Up to the action of the
global rigid transformation $\mathrm{\it SE}(3)$, these two linkages are equivalent and share the
same configuration space $\mathcal{C}(L)$; in the left linkage, the global action is killed by
fixing the positions of three joints except for $p$.
\begin{figure}[ht]
  \center
\includegraphics[width=4cm]{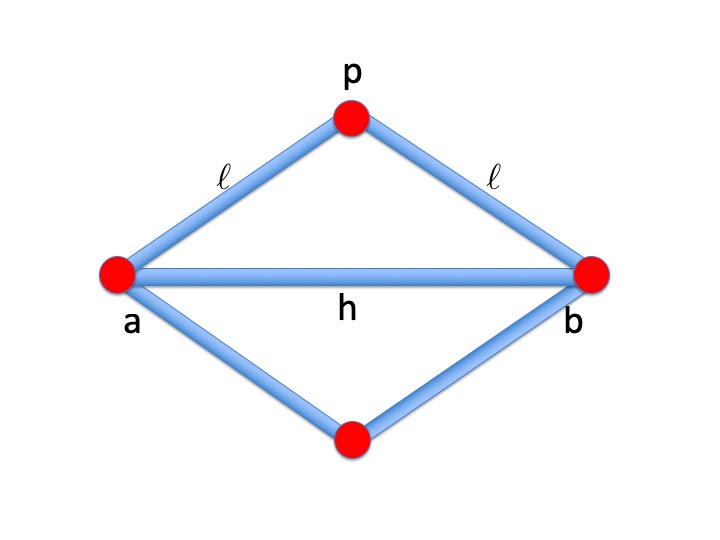}\hspace{1cm}
\includegraphics[width=4cm]{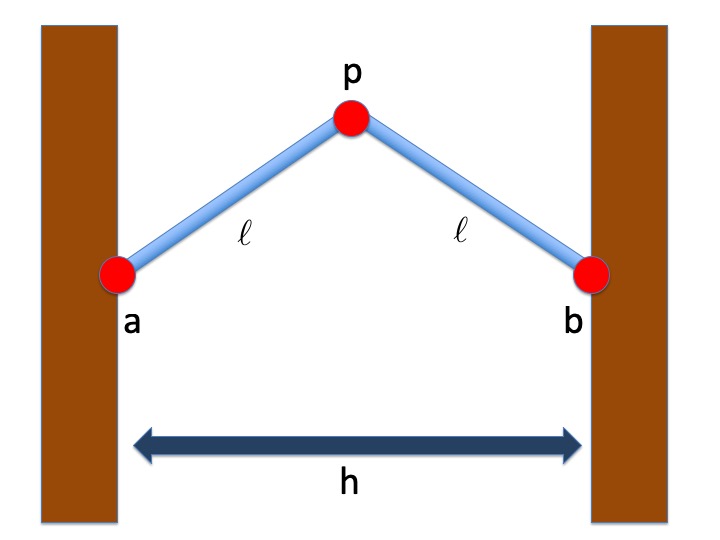}
\caption{Example of equivalent pin joint linkages.}\label{fig:wall}
\end{figure}
The topology of $\mathcal{C}(L)$ changes with respect to the parameter $l$ which is the length of the bars.
Namely, we have
\[
\mathcal{C}(L)=\{ x_p\in \R^3 \mid |x_p-x_a|^2=|x_p-x_b|^2=l^2 \}=
\begin{cases} S^1 & (l>2h) \\ pt & (l=2h) \\ \emptyset & (l<2h) \end{cases}.
\]
This seemingly trivial example is indeed related to a deeper and subtle question on the topology of
the configuration space; the space is identified with the \emph{real} solutions to a system of
algebraic equations.
\end{ex}
%%%%%%%%%%%%%

\subsection{Hinged linkage in three space}\label{sec:hinged-linkage}
Now, we focus on a class of spatial linkages consisting of hinges,
known also as three dimensional body-hinge frameworks \cite{jordan}.
In this case, the definition in the previous section can be reduced
to a simpler form.

Notice that in $\R^3$ a pair of hinges connected by a link can be modelled by a tetrahedron.  A
hinge is an isometrically embedded real line in $\R^3$.  
Given a pair of
hinges, unit-length segments on the hinges containing the base points in the centre span a
tetrahedron, or a quadrilateral when the two hinges are parallel (see Fig. \ref{fig:tetrahedron} Left).
It is sometimes convenient to decompose the link constraint $C_{(v_1,v_2)}\in \mathrm{\it SE}(3)$ into
three parts; a translation along the hinge direction at $v_1$, a screw motion along an axis
perpendicular to both hinges, and a translation along the hinge direction at $v_2$.  This
corresponds to a common presentation among mechanists called the Denavit--Hartenberg parameters
\cite{DH}.  We can find the decomposition geometrically as follows: Find a line segment which is
perpendicular to the both hinges connected by the link $e$, which we call the \emph{core segment}.
It is unique unless the hinges are parallel.  The intersection points of the core segment and the
hinges are called the \emph{marked} points.  Form a tetrahedron from the line segments on hinges
containing the marked points in the centre.  By construction, this tetrahedron has a special shape
that the line connecting the centre of two hinge edges (the core segment) is perpendicular to the
hinge edges.  Such a tetrahedron is called a \emph{disphenoid}.  The shape of the disphenoid defines
a screw motion along the core segment up to a $\pi$-rotation.  The translations along the hinge
directions are to match the marked points to the base points (see Fig. \ref{fig:tetrahedron}).  To
sum up, a spatial hinged linkage can be considered as a collection of lines connected by disphenoids
at marked points.
\begin{figure}[ht]
\center
\includegraphics[width=4cm]{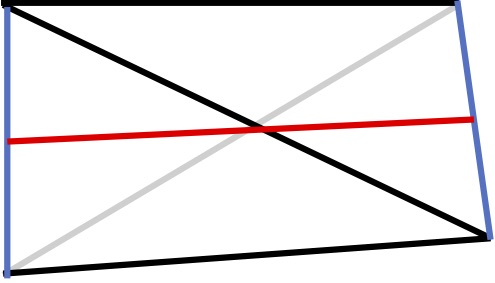} \hspace{1cm}
\includegraphics[width=4cm]{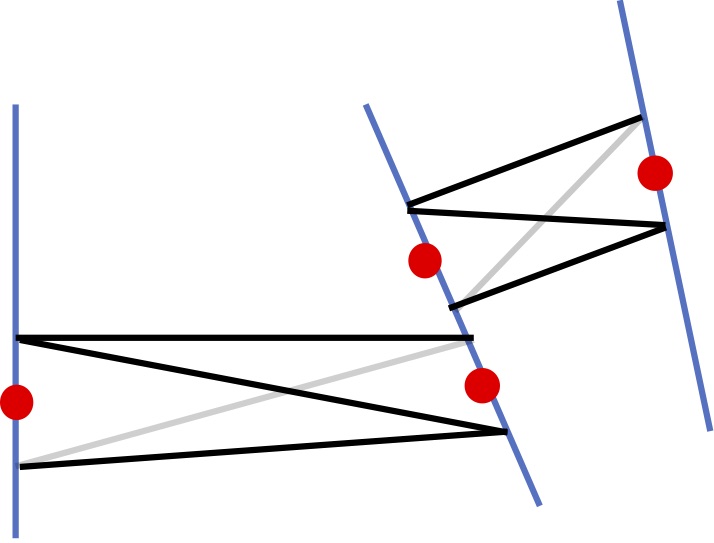}
\caption{Left: a disphenoid formed by two hinge edges,
Right: three hinges connected by disphenoids. The dots indicate the marked points.}
\label{fig:tetrahedron}
\end{figure}
Thus, we arrive in the following definition.
\begin{definition}\label{dfn:hinged-network}
A \emph{hinged network} consists of 
\begin{itemize}
\item a connected oriented finite graph $G=(V,E)$,
\item two edge labels $\nu: E\to [0,\pi)$ called the \emph{torsion angle} and $\varepsilon: E\to
\R_{\ge 0}$ called the \emph{segment length},
\item and a vertex label $\iota_v: E(v)\to \R$ called the \emph{marking}, where $E(v)\subset E$ is
      the set of edges adjacent to $v\in V$.
\end{itemize}
A state of a hinged network is an assignment to each vertex $v\in V$
of an isometric embedding $h_v: \R \to \R^3$ such that for any $(v_1,v_2)\in E$
\begin{enumerate}
\item $|l|=\varepsilon(v_1,v_2)$, where $l=h_{v_1}\circ\iota_{v_1}(v_1,v_2)-h_{v_2}\circ \iota_{v_2}(v_1,v_2)$
\item $l\perp h_{v_1}(\R)$ and $l\perp h_{v_2}(\R)$
\item $\angle h_{v_1}(\R)h_{v_2}(\R)=\nu$, where the angle is measured in the left-hand screw manner with respect to $l$.
\end{enumerate}
Intuitively, $h_v(\R)$ is the line spanned by the hinges, and the first two conditions demand that
the marked points are connected by the core segments $l$, whereas the last condition dictates the
torsion angle of adjacent hinges $h_{v_1}(\R)$ and $h_{v_2}(\R)$.
\end{definition}

A hinged network is said to be \emph{serial} when the graph $G$ is a line graph; i.e., a connected
graph of the shape $\bullet \to \bullet \to \bullet \to \cdots \to \bullet$.  It is said to be
\emph{closed} when the graph $G$ is a circle graph; i.e., a connected finite graph with every vertex
having outgoing degree one and incoming degree one.  A hinged network is homogeneous if
\begin{itemize}
\item $\mathrm{\it Aut}(G)$ acts on $G$ transitively,
\item $\nu(e),\varepsilon(e)$, and $\iota_v$ do not depend on $e\in E$ and $v\in V$.
That is, it is made of congruent tetrahedral links.
\end{itemize}

\begin{ex}
A planar pin joint linkage is a special type of hinged network with $\nu(e)=0$ for all $e\in E$ and
$\iota_v=0$ for all $v\in V$.  That is, all hinges are parallel and marked points are all at the
origin.  On the other hand, any hinged network can be thought of as a spatial pin joint linkage by
replacing every tetrahedral link with four bar links connected by four pin joints forming the
tetrahedron.  Therefore, hinged networks form an intermediate class of linkages which sits between
planar pin joint linkages and spatial pin joint linkages.
\begin{figure}[ht]
  \center
\includegraphics[width=4.5cm]{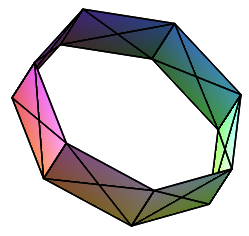}
\caption{A degenerate hinged network over a circle corresponding to a planar six-bar pin joint linkage.}\label{fig:rope}
\end{figure}
\end{ex}

\begin{ex}
The hinged network depicted in Fig. \ref{fig:rope2} is over the wedge sum of two circle graphs.
It exhibits a jump roping motion.
A similar but more complex network is found in \cite[\S 6]{Chen2011}.
\begin{figure}[ht]
  \center
\includegraphics[width=4.5cm]{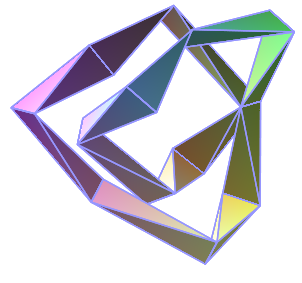}
\caption{A hinged network over the wedge of two circles.}\label{fig:rope2}
\end{figure}
\end{ex}
\begin{ex}
Closed hinged networks with $\varepsilon(e)=0$ (that is, adjacent hinge lines intersect) for all $e\in E$ provide
a linkage model for discrete \emph{developable strips} studied recently by
K. Naokawa and C. M\"uller (see Fig. \ref{fig:developable}).
They are made of (planar) quadrilaterals joined together by
the pair of non-adjacent edges as hinges.
\begin{figure}[ht]
  \center
\includegraphics[width=4.5cm]{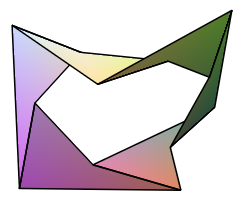}
\includegraphics[width=4.5cm]{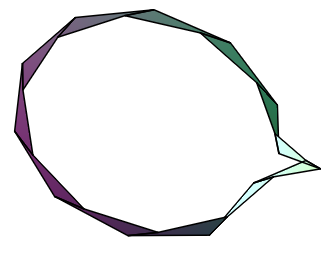}
\caption{Developable discrete M\"obius strip consisting of 6 (respectively 12) congruent
quadrilateral links.}
\label{fig:developable}
\end{figure}
\end{ex}

%%%%%%%%%%%%%
\section{Hinged network and discrete space curve}\label{circle-system}
In this section, we describe a connection between spatial closed hinged networks and discrete closed
space curves.  This connection is the key idea of this chapter which provides a way to study certain linkages using
tools in discrete differential geometry.

First, we briefly review the basic formulation of discrete space curves (see, for example, \cite{inoguchi2014discrete}).
A \emph{discrete space curve} is a map
\[
\gamma : \mathbb{Z} \rightarrow {\mathbb{R}}^{3}, \quad (i \mapsto \gamma_n).
\]
% chapter <=> paper
For simplicity, in this chapter we always assume that $\gamma_n \neq \gamma_{n+1}$ for any $n$ and
that three points $\gamma_{-1}$, $\gamma_{0}$ and $\gamma_{1}$ are not colinear.  The {\it tangent
vector} $T: \Z\to S^2$ is defined by
\begin{equation}\label{def:tangent}
  T_{n} = \frac{\gamma_{n+1}-\gamma_{n}}{\varepsilon_{n}}, \quad
  \varepsilon_{n} = \left| \gamma_{n+1}-\gamma_{n} \right|.
\end{equation}
We say $\gamma$ has a constant speed of $\varepsilon$ if $\varepsilon_{n} = \varepsilon$ for all
$n$.  A discrete space curve with a constant speed is sometimes referred to as an {\it arc length
parametrised curve} \cite{Hoffmann:MI}.  The {\it normal vector} $N: \Z\to S^2$ and the {\it
binormal vector} $B: \Z\to S^2$ are defined by
\begin{align}\label{def:normal_binormal}
&  B_{n} = \begin{cases} \frac{T_{n-1} \times T_{n}}{\left| T_{n-1} \times T_{n} \right|} & (T_{n-1} \times T_{n}\neq0)\\
  B_{n-1} & (T_{n-1} \times T_{n}=0 \text{ and } n>0) \\
  B_{n+1} & (T_{n-1} \times T_{n}=0 \text{ and } n<0),
  \end{cases} \\[2mm]
&  N_{n} = B_{n} \times T_{n},
\end{align}
respectively. Then, $\left[ T_n, \, N_n, \, B_n \right] \in \mathrm{\it SO}(3)$ is called the {\it
Frenet frame} of $\gamma$.  For our purpose, it is more convenient to use a modified version of the
ordinary Frenet frame, which we define as follows.  Set $b_0 = B_0$ and define $b_n = \pm B_n$
recursively so that $\langle b_{n}\times b_{n-1}, T_{n-1} \rangle \ge 0$ and $\langle b_{n-1}, b_n
\rangle\neq -1$.  Then, $\Phi_{n} = [T_n,\widetilde{N}_n,b_n]\in \mathrm{\it SO}(3)$, where
$\widetilde{N}_n=b_n\times T_n$ (see Fig. \ref{fig:frenet1}).
\begin{figure}[ht]
  \center
\includegraphics[width=12cm]{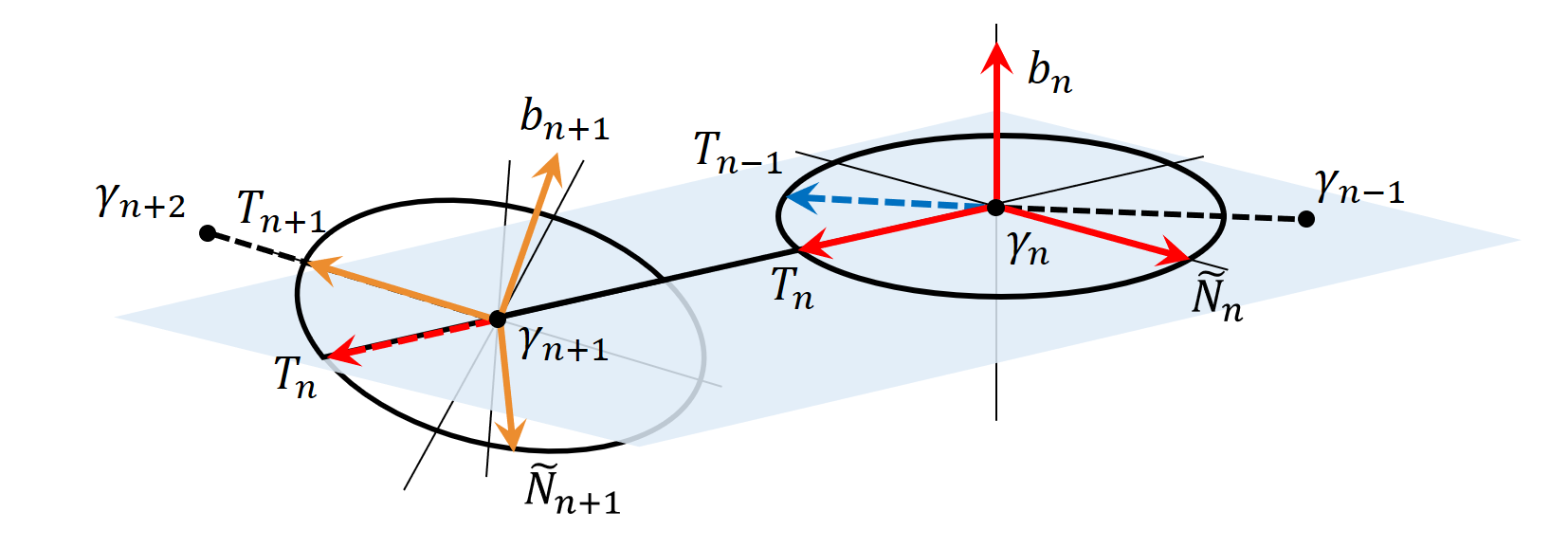}
\caption{A discrete space curve with the frame $\Phi_n$.}\label{fig:frenet1}
\end{figure}

For $\theta\in \R$, we define $R_{1}(\theta) ,R_{3}(\theta) \in \mathrm{\it SO}(3)$ by
\begin{equation}
  R_{1}(\theta) =
    \begin{bmatrix}
      1 & 0 & 0 \\
      0 & \cos{\theta} & -\sin{\theta} \\
      0 & \sin{\theta} & \cos{\theta}
    \end{bmatrix}, \quad
  R_{3}(x) =
    \begin{bmatrix}
      \cos{\theta} & -\sin{\theta} & 0 \\
      \sin{\theta} & \cos{\theta} & 0 \\
      0 & 0 & 1
    \end{bmatrix}.
\end{equation}
There exist $\kappa : \mathbb{Z} \rightarrow [-\pi, \, \pi)$ and $\nu : \mathbb{Z} \rightarrow [0,\, \pi)$ 
such that
\begin{equation}\label{def:discrete_frenet_eq}
  \Phi_{n+1} = \Phi_{n} L_{n}, \quad
  L_{n} = R_{1}(-\nu_{n+1}) R_{3}(\kappa_{n+1}).
\end{equation}
%which are referred to as the {\it (discrete) Frenet-Serret formula}.
We call $\kappa$ the
\emph{signed curvature angle} and $\nu$ the \emph{torsion angle}.  Fig. \ref{fig:frenet2}
illustrates how to obtain $\Phi_{n-1}$ from $\Phi_{n}$ by \eqref{def:discrete_frenet_eq}.  Note that
we have
\begin{equation}\label{def:angle}
\begin{split}
 \langle T_{n},T_{n-1} \rangle = \cos{\kappa_n}, \quad
&  \langle b_{n},b_{n-1} \rangle = \cos{\nu_n}, \quad
  \langle b_{n},\widetilde{N}_{n-1} \rangle = \sin{\nu_n}, \\[2mm]
& \langle b_n,T_{n} \rangle=\langle b_{n+1},T_{n} \rangle=0.
\end{split}
 \end{equation}
\begin{figure}[ht]
  \center
\includegraphics[width=12cm]{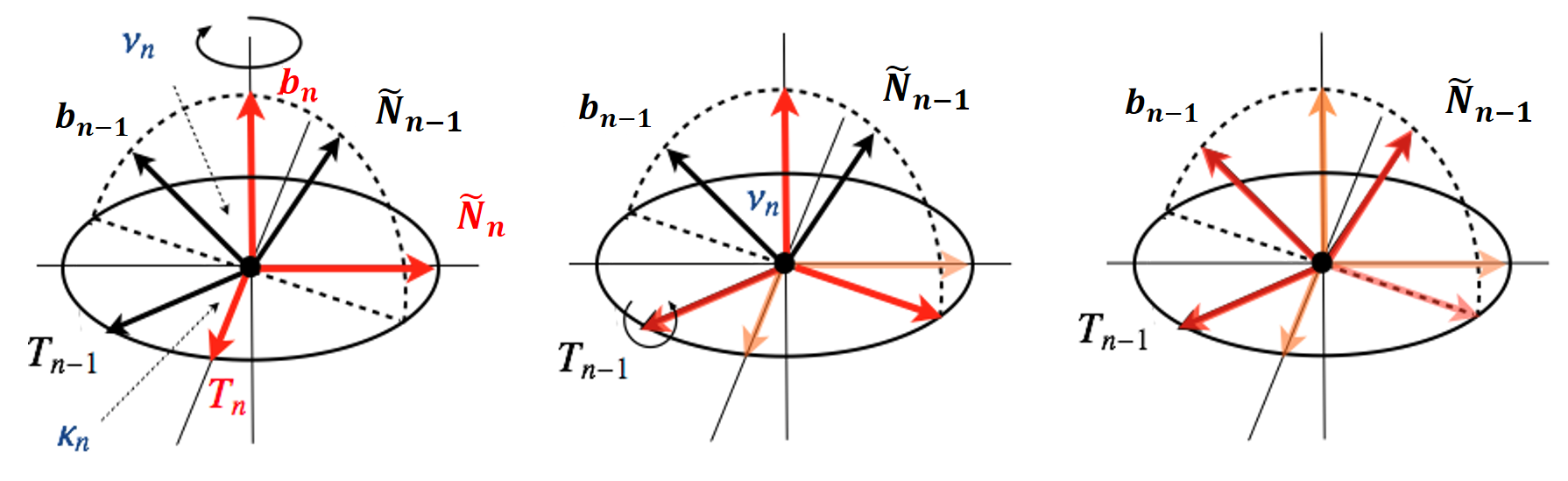}
\caption{The curvature angle $\kappa$ and the torsion angle $\nu$.}\label{fig:frenet2}
\end{figure}

The reason why we introduce the modified frame is that the ordinary Frenet frame behaves
discontinuously under deformation when the ordinary curvature angle vanishes at a point.  During the
turning-over motion of a Kaleidocycle, it goes through such a state at some points, and the
above modified frame behaves consistently even under the situation.

\medskip Fix a natural number $N$.  A discrete space curve $\gamma$ is said to be \emph{closed} of
length $N$ if $\gamma_{n+kN}=\gamma_n$ for any $k\in \Z$.  Unlike the ordinary Frenet frame,
closedness does not imply $\Phi_{n+kN}=\Phi_n$ but they may differ by rotation by $\pi$ around $T_n$.  We say
$b$ is oriented (resp. anti-oriented) if $b_n=b_{n+N}$ (resp. $b_n=-b_{n+N}$) for all $n$.

We can consider a discrete version of the Darboux form \cite{darboux1917,Weiner1977},
which gives a correspondence between spherical curves and space curves.
Given $b: \Z \to S^2$ with $b_n\times b_{n-1}\neq 0$ for all $n$ and $\varepsilon: \Z\to \R_{\ge 0}$,
we can associate a discrete space curve satisfying
\begin{equation}\label{eqn:Darboux_form}
\gamma_0=0,\quad \gamma_n=\gamma_{n-1}+\varepsilon_{n-1}\dfrac{b_{n} \times b_{n-1}}{|b_{n} \times b_{n-1}|}, 
\end{equation}
which we denote by $\gamma^{b,\varepsilon}$.
The curve $\gamma^{b,\varepsilon}$ is closed of length $N$ if
\begin{equation}
\sum_{n=0}^{N-1} \left( \varepsilon_{k+n} \dfrac{b_{k+n+1} \times b_{k+n}}{|b_{k+n+1} \times b_{k+n}|} \right) =0 
\end{equation}
for all $k$.

Notice that a serial (resp. closed) hinged network with $\iota_v= 0$ for all $v\in V$ (see
Def. \ref{dfn:hinged-network}) can be modelled by an open (resp. a closed) discrete space curve; its
base points form the curve and hinge directions are identified with $b_n$ (see
Fig. \ref{fig:frenet0}).  This is the crucial observation of this chapter.
% chapter <=> paper
\begin{figure}[ht]
  \center
\includegraphics[width=10cm]{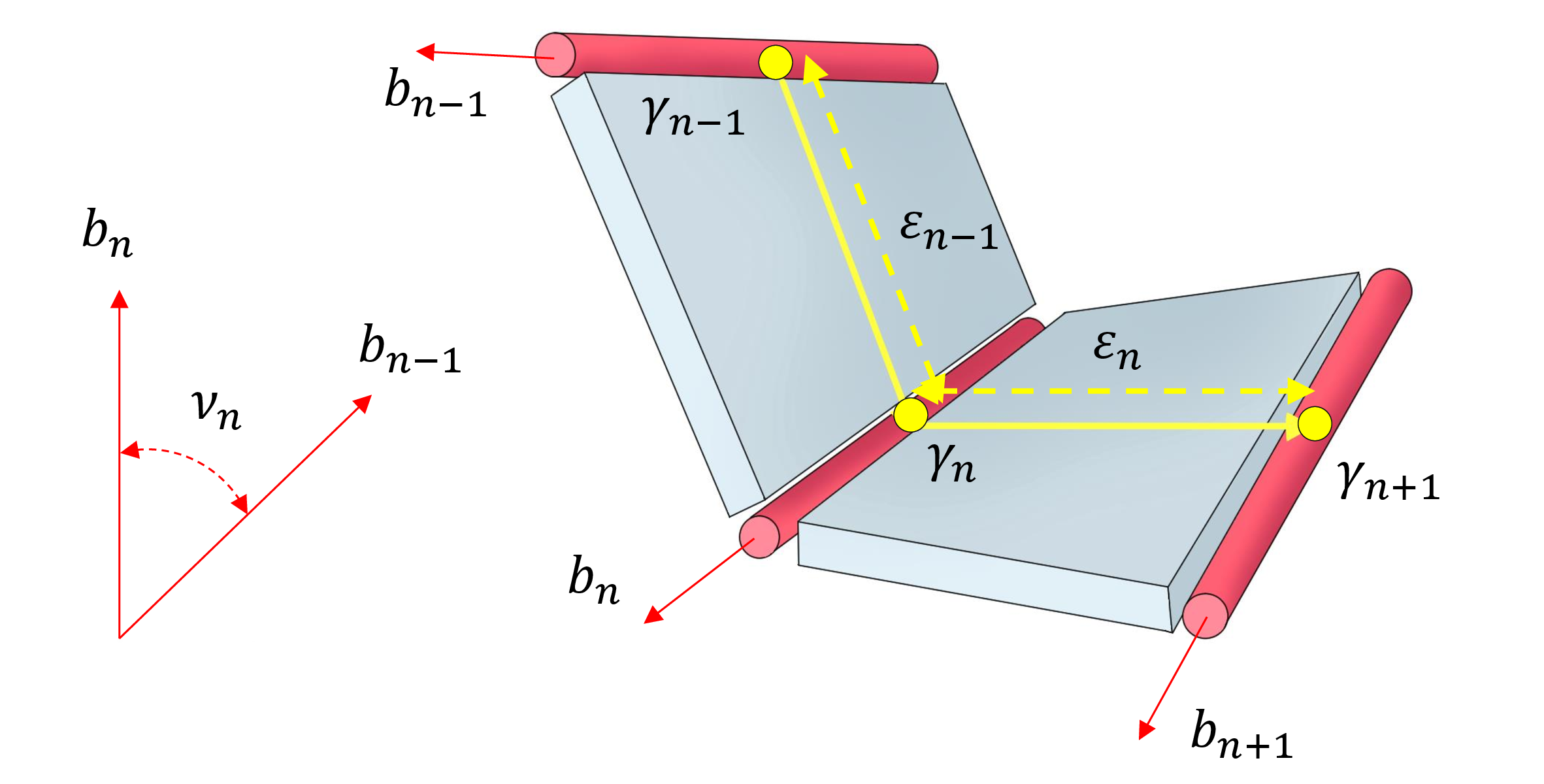}
\caption{Hinged network and discrete space curve.}\label{fig:frenet0}
\end{figure}

%%%%%%%%%%%
Now we introduce our main object, \emph{Kaleidocycles}, which are homogeneous closed hinged networks.
We model them as constant speed discrete space curves of constant torsion.
They are a generalisation to a popular paper toy called the Kaleidocycle (see, e.g.,
\cite{Byrnes,Schattschneider1987}).  A serial hinged network similar to our Kaleidocycle is proposed
in \cite{Moses2013}.

\begin{definition}\label{dfn:kaleidocycle}
Fix $\nu\in [0,\pi]$ and $\epsilon>0$.  An $N$-Kaleidocycle with a speed $\varepsilon$ and a torsion
angle $\nu$ is a closed discrete space curve $\gamma$ of length $N$ which has a constant speed
$\varepsilon_n=\varepsilon$ and a constant torsion angle $\nu_n=\nu$.  It is said to be
\emph{oriented} (resp. \emph{anti-oriented}) when associated $b$ is oriented (resp. anti-oriented).
\end{definition}
When $\nu$ is either $0$ or $\pi$, the corresponding Kaleidocycles are planar,
and we call them \emph{degenerate}.
\begin{figure}[ht]
    \center
  \includegraphics[width=4.5cm]{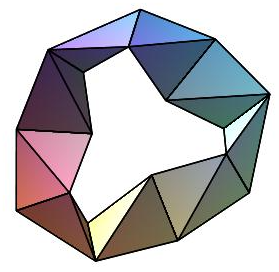}
  \includegraphics[width=4.5cm]{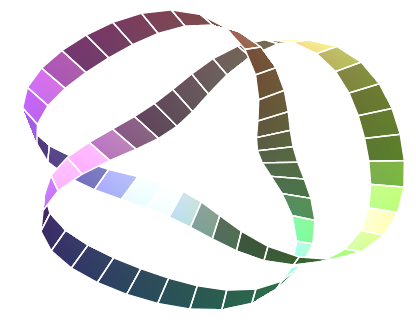}
  \caption{Left: anti-oriented Kaleidocycle with $N=9$.
  Right: a Kaleidocycle with a knotted topology.}\label{fig:kaleido}
\end{figure}
For fixed $N$ and $\varepsilon$,
an oriented (resp. anti-oriented) non-degenerate Kaleidocycle
with a torsion angle $\nu$ is determined by the Darboux form $\gamma^{b,\varepsilon}$
by a map $b: \Z\to S^2$ satisfying
p\begin{itemize}
\item $b_{n+N}=b_n$ (resp. $b_{n+N}=-b_n$),
\item $\langle b_n,b_{n+1}\rangle = \cos\nu$,
\item $\sum\limits_{n=0}^{N-1} b_{n+1} \times b_{n} =0$.
\end{itemize}
We use $b$ and $\gamma$ interchangeably to represent a Kaleidocycle.

Consider the real algebraic variety $\overline{\mathcal{M}}_N$ defined by the following system of
quadratic equations (\cite[Ex. 5.2, 8.13]{Bertini}):
\begin{equation}\label{eq:DH}
\langle b_n, b_{n+1} \rangle = c  \quad (0\le n < N), \qquad
\sum_{n=0}^{N-1} b_{n+1}\times b_{n} = 0, \qquad b_N=\pm b_0,
\end{equation}
where $c$ is considered as an indeterminate.  The orthogonal group $\mathrm{\it O}(3)$ acts on
$b_i$'s in the standard way, and hence, on $\overline{\mathcal{M}}_N$.  Denote by $\mathcal{M}_N$ the quotient
of $\overline{\mathcal{M}}_N$ by the action of $\mathrm{\it O}(3)$.  The variety $\mathcal{M}_N$ serves as the
configuration space of all $N$-Kaleidocycles with varying $c=\cos\nu$. It decomposes into two
disjoint sub-spaces $\mathcal{M}_N^+$ consisting of all oriented Kaleidocycles ($b_N=b_0$) and
$\mathcal{M}_N^-$ consisting of anti-oriented ones ($b_N=-b_0$).

As $\mathcal{M}_N^-$ (resp. $\mathcal{M}_N^+$) is a closed variety, its image under the projection
$\pi_c$ onto the $c$-axis is a union of closed intervals.  Notice that the image
$\pi_c(\mathcal{M}_N^-)$ does not coincide with the whole interval $[-1,1]$; $c=1$ means $b_i$ are
all equal so we cannot have $b_N=-b_0$.  The fibre $\pi_c^{-1}(c)$ consists of $N$-Kaleidocycles
with a fixed $c$.  With a generic value of $c$, a simple dimension counting in \eqref{eq:DH} shows
that $\dim(\pi_c^{-1}(c))=N-6$.  Hence, the degree of freedom (DOF) of the Kaleidocycle with a torsion
angle $\nu=\arccos(c)$ is generally $N-6$.  For $N>6$, a generic Kaleidocycle is
\emph{reconfigurable} meaning that it can continuously change its shape.  We will investigate a
particular series of reconfiguration in the next section.
\begin{remark}
The most popular Kaleidocycle with $N=6$ has $c=0$, which is equivalent to the threefold symmetric
Bricard 6R linkage (Fig. \ref{fig:bricard}).  This Kaleidocycle is highly symmetric and not generic,
resulting in $1$ DOF \cite{Foweler-Guest2005}.
\end{remark}

% !TEX root = linkage_deform.tex
\section{Deformation of discrete curves}\label{sec:motion}
\subsection{Continuous isoperimetric deformations on discrete curves}\label{sec:isoperimetric}
Kaleidocycles exhibit a characteristic turning-over motion (see Fig. \ref{fig:turning} and see
\cite{code} for some animations).  In general, an $N$-Kaleidocycle has $N-6$ degrees of freedom so
that it wobbles in addition to turning-over.  With special values of torsion angle, however, the
DOF of the Kaleidocycle seems to degenerate to exactly one, leaving only the turning-over motion as we
will discuss in \S \ref{sec:final}.  In this case, the motion of the core segment looks to be
orthogonal to the hinge directions.  In the following, we would like to model the motion explicitly.
It turns out that we can construct the motion of Kaleidocycles using semi-discrete mKdV and sine-Gordon equations.
\begin{figure}[ht]
    \center
  \includegraphics[width=3.0cm]{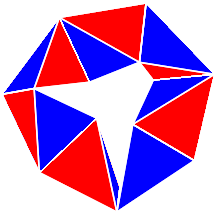}
  \includegraphics[width=3.1cm]{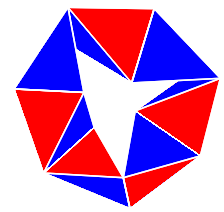}
  \includegraphics[width=2.9cm]{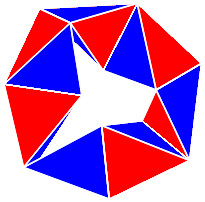}
  \includegraphics[width=2.9cm]{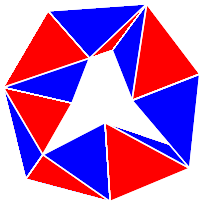}
  \caption{Turning-over motion of a Kaleidocycle with $N=7$.}
  \label{fig:turning}
\end{figure}

In this section, we consider certain continuous deformations of discrete space curves which
correspond to motion of homogeneous serial and closed hinged networks.  Our approach is to
construct a flow on the configuration space by differential-difference equations.  We use the same
notations as in Section \ref{circle-system}.  Observe that a hinged network moves in such a way that
its tetrahedral links are not distorted.  In the language of discrete space curves, the motion
corresponds to a deformation which preserves the speed $\epsilon_n$ and the torsion angle $\nu_n$
for all $n$.
%and the deformation at each point of the discrete curve is confined to the osculating plane.

Let $\gamma(0): \Z\to \R^3$ be an (open) discrete space curve which has a constant speed
$\varepsilon_n(0)=\varepsilon_*(0)$ and a constant torsion angle $\nu_n(0)=\nu_*(0)$.  Given a
family of functions $\doubleu(t) : \mathbb{Z} \rightarrow \mathbb{R}$ with the deformation parameter
$t \in \mathbb{R}$ and a constant $\rho>0$, we consider a family of discrete space curves
$\gamma(t)$ defined by
\begin{equation}\label{def:deformation_curve}
  \frac{d \gamma_n}{d t} = \frac{\varepsilon_n}{\rho}
  \left( \cos{\doubleu_n}T_n + \sin{\doubleu_n}\widetilde{N}_n \right) \qquad (n\in \Z).
\end{equation}
That is, the motion of each point $\gamma_{n}$ is confined in the osculating plane and its speed
depends only on the length of the segment $\varepsilon_n=|\gamma_{n+1}-\gamma_{n}|$.  We say a
deformation is \emph{isoperimetric} if the segment length $\varepsilon_n$ does not depend on $t$ for
all $n$, We would like to find conditions on $\doubleu$ under which the above deformation is
isoperimetric.  From \eqref{def:tangent}, \eqref{def:discrete_frenet_eq} and
\eqref{def:deformation_curve}, we have
\begin{align*}
  \frac{d \varepsilon_n}{d t}
  &= \frac{\varepsilon_n}{\rho}
  \left\langle \Phi_{n+1}
  \begin{bmatrix}
    \cos{\doubleu_{n+1}} \\ \sin{\doubleu_{n+1}} \\ 0
  \end{bmatrix} -
  \Phi_{n}
  \begin{bmatrix}
    \cos{\doubleu_{n}} \\ \sin{\doubleu_{n}} \\ 0
  \end{bmatrix},
  \Phi_{n}
  \begin{bmatrix}
    1 \\ 0 \\ 0
  \end{bmatrix}
   \right\rangle \\[0.5pc]
   &= \frac{\varepsilon_n}{\rho}
   \left\langle
   \Phi_{n}
   \begin{bmatrix}
     \cos{(\kappa_{n+1} + \doubleu_{n+1})} - \cos{\doubleu_{n}} \\
     \cos{\nu_n}\sin{(\kappa_{n+1} + \doubleu_{n+1})} - \sin{\doubleu_{n}} \\
     -\sin{\nu_n} \sin{(\kappa_{n+1} + \doubleu_{n+1})}
   \end{bmatrix},
   \Phi_{n}
   \begin{bmatrix}
     1 \\ 0 \\ 0
   \end{bmatrix}
    \right\rangle \\[0.5pc]
    &= \frac{\varepsilon_n}{\rho}
    \big( \cos{(\kappa_{n+1} + \doubleu_{n+1})} - \cos{\doubleu_{n}} \big).
\end{align*}
Therefore, for each $n$, ${d \varepsilon_n}/{d t}$ vanishes if and only if
\begin{equation}\label{eqn:kappa_psi_cos}
  \cos{(\kappa_{n+1} + \doubleu_{n+1})} - \cos{\doubleu_{n}} = 0,
\end{equation}
which yields
\begin{equation}\label{condition:kappa_psi2}
    \doubleu_n = - \doubleu_{n-1}-\kappa_n,
\end{equation}
or
\begin{equation}\label{condition:kappa_psi1}
     \doubleu_n = \doubleu_{n-1}- \kappa_n.
\end{equation}
We consider a deformation when \eqref{condition:kappa_psi2} (resp. \eqref{condition:kappa_psi1})
simultaneously holds for all $n$.  Note that in this case $\doubleu_n(t)$ for all $n$ is determined
once $\doubleu_0(t)$ is given.

Those deformations are characterised by the following propositions:
%%%%%%%%%%
% propositon2
\begin{proposition}\label{prop1}
Let $\gamma(0): \Z\to \R^3$ be a discrete space curve with a constant speed
$\varepsilon_n(0)=\varepsilon_*(0)$ and a constant torsion angle $\nu_n(0)=\nu_*(0)$.  Let
$\gamma(t)$ be its deformation according to \eqref{def:deformation_curve} with $\doubleu: \Z \to \R$
satisfying the condition \eqref{condition:kappa_psi2}. Then we have:
\begin{enumerate}
 \item The speed $\varepsilon_n(t)$ and the torsion angle $\nu_n(t)$ do not depend on $t$ nor $n$.
       That is, $\varepsilon_n(t)=\varepsilon_*(0)$ and $\nu_n(t)=\nu_*(0)$ for all $t$ and $n$.
 \item The signed curvature angle $\kappa_n=\kappa_n(t)$ and $\doubleu_n=\doubleu_n(t)$ satisfy
       \begin{equation}\label{eq:sine-Gordon_kappa2}
         \frac{d \kappa_n}{d t} = {\alpha} \left( \sin{\doubleu_{n-1}} - \sin{\doubleu_{n}} \right),
       \end{equation}
       where ${\alpha} = \frac{1 + \cos{\nu_*(0)}}{\rho}$.
 \item The deformation of the frame $\Phi_n(t) = [T_n(t), \, \widetilde{N}_n(t), \, b_n(t)]$ is given by
       \begin{equation}\label{def:deformation_frame1}
         \begin{split}
         \frac{d \Phi_n}{d t} &= \Phi_n M_n, \\
         M_n &=\frac{1}{\rho}
             \begin{bmatrix}
               0 & \left( 1 + \cos{\nu_*(0)} \right)\sin{\doubleu_n} & - \sin{\nu_*(0)}\sin{\doubleu_n} \\[0.3pc]
               - \left( 1 + \cos{\nu_*(0)} \right)\sin{\doubleu_n} & 0 & \sin{\nu_*(0)}\cos{\doubleu_n} \\[0.3pc]
                  \sin{\nu_*(0)}\sin{\doubleu_n} & - \sin{\nu_*(0)}\cos{\doubleu_n} & 0
                \end{bmatrix}.
              \end{split}
          \end{equation}
  \end{enumerate}
\end{proposition}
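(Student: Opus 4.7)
The plan is to derive everything from a direct computation of the frame evolution matrix $M_n = \Phi_n^{-1}\,d\Phi_n/dt$. Preservation of the speed (half of claim (1)) has already been verified in the lead-up to the proposition: under condition \eqref{condition:kappa_psi2} one has $\kappa_{n+1}+\doubleu_{n+1}=-\doubleu_n$, so $\cos(\kappa_{n+1}+\doubleu_{n+1})=\cos\doubleu_n$ and $d\varepsilon_n/dt=0$. The rest of the proposition will follow once $M_n$ is pinned down.

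Since $\Phi_n\in\mathrm{SO}(3)$ the matrix $M_n$ is skew-symmetric, so it has three independent entries to determine. Two of them come from computing $dT_n/dt$ directly from $T_n=(\gamma_{n+1}-\gamma_n)/\varepsilon_*(0)$ and \eqref{def:deformation_curve}, re-expressing $T_{n+1}$ and $\widetilde N_{n+1}$ in the frame at $n$ via \eqref{def:discrete_frenet_eq} and applying $\kappa_{n+1}+\doubleu_{n+1}=-\doubleu_n$; the outcome is
\begin{equation*}
\frac{dT_n}{dt} = \frac{1}{\rho}\Bigl(-(1+\cos\nu)\sin\doubleu_n\,\widetilde N_n+\sin\nu\,\sin\doubleu_n\,b_n\Bigr),
\end{equation*}
which fixes the first column of $M_n$, and by skew-symmetry the first row as well. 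For the remaining entry $(M_n)_{23}=-(M_n)_{32}$ I would differentiate the orthogonality $\langle b_n,T_{n-1}\rangle=0$: writing $T_{n-1}=\cos\kappa_n\,T_n-\sin\kappa_n\,\widetilde N_n$ and using $\langle b_n,\widetilde N_{n-1}\rangle=\sin\nu_n$, $\langle b_n,b_{n-1}\rangle=\cos\nu_n$ from \eqref{def:angle}, a short calculation simplified by the identity $\sin\doubleu_{n-1}+\cos\kappa_n\sin\doubleu_n=-\sin\kappa_n\cos\doubleu_n$ (which is \eqref{condition:kappa_psi2} rewritten) yields $(M_n)_{23}=\sin\nu\cos\doubleu_n/\rho$, matching \eqref{def:deformation_frame1}.

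With $M_n$ in hand, the remaining claims reduce to routine algebra. Torsion preservation follows by differentiating $\langle b_{n-1},b_n\rangle=\cos\nu_n$ using the third columns of $M_{n-1}$ and $M_n$; after substituting and applying \eqref{condition:kappa_psi2} one sees the result collapses to zero, completing claim (1). Claim (2) follows from the compatibility of the linear system, obtained by differentiating $\Phi_{n+1}=\Phi_n L_n$, namely $dL_n/dt = L_n M_{n+1} - M_n L_n$. Since $\nu$ is now known to be constant, the left-hand side equals $(d\kappa_{n+1}/dt)\,R_1(-\nu)R_3'(\kappa_{n+1})$; comparing the $(1,1)$-entry on both sides, the common factor $\sin\kappa_{n+1}$ cancels and yields $d\kappa_{n+1}/dt=(1+\cos\nu)(\sin\doubleu_n-\sin\doubleu_{n+1})/\rho$, which is \eqref{eq:sine-Gordon_kappa2} after shifting the index by one.

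The main obstacle I expect is bookkeeping rather than any conceptual difficulty: in particular, one must check that the sign convention singling out $b_n$ from $\pm B_n$ does not flip during the deformation, which holds by continuity since $\langle b_{n-1},b_n\rangle=\cos\nu\neq-1$ is preserved. Once that is settled, the entire proof is driven by the single algebraic identity $\kappa_n+\doubleu_n=-\doubleu_{n-1}$ from \eqref{condition:kappa_psi2}, applied at appropriate moments to collapse trigonometric expressions.
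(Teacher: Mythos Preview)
Your argument is correct, but the route differs from the paper's in its ordering and in how two of the key quantities are obtained. The paper first differentiates $\cos\kappa_n=\langle T_n,T_{n-1}\rangle$ to derive \eqref{eq:sine-Gordon_kappa2}, then computes $\dot b_n$ by directly differentiating the cross-product formula $b_n=\pm(T_{n-1}\times T_n)/|T_{n-1}\times T_n|$ (which is why it needs $\dot\kappa_n$ first, to handle the derivative of the normalising factor $|\sin\kappa_n|$), assembles $M_n$ from $\dot T_n,\dot b_n,\dot{\widetilde N}_n$, and only then checks $\dot\nu=0$. You instead fix the missing entry $(M_n)_{23}$ from the scalar constraint $\langle b_n,T_{n-1}\rangle=0$, use the completed $M_n$ to show $\dot\nu=0$, and finally extract $\dot\kappa_{n+1}$ from the zero-curvature relation $\dot L_n=L_nM_{n+1}-M_nL_n$. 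Your approach is more in the Lax-pair spirit---everything flows from the pair $(L_n,M_n)$ once $M_n$ is pinned down---and sidesteps the lengthier cross-product calculation; the paper's approach is a more direct geometric differentiation. Both silently divide by $\sin\kappa_n$ at the corresponding step, so neither gains anything at degenerate curvature points, and your closing remark about the sign of $b_n$ not flipping under the flow is a point the paper leaves implicit.
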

\begin{proposition}\label{prop2}
Let $\gamma(0): \Z\to \R^3$ be a discrete space curve with a constant speed
$\varepsilon_n(0)=\varepsilon_*(0)$ and a constant torsion angle $\nu_n(0)=\nu_*(0)$.
Let $\gamma(t)$ be its deformation according to \eqref{def:deformation_curve}
with $\doubleu: \Z \to \R$ satisfying the condition \eqref{condition:kappa_psi1}.
Then we have:
  \begin{enumerate}
    \item The speed $\varepsilon_n(t)$ and the torsion angle $\nu_n(t)$ do not depend on $t$ nor $n$.
          That is, $\varepsilon_n(t)=\varepsilon_*(0)$ and $\nu_n(t)=\nu_*(0)$ for all $t$ and $n$.
    \item The signed curvature angle $\kappa_n=\kappa_n(t)$ and $\doubleu_n=\doubleu_n(t)$ satisfy
          \begin{equation}\label{eq:sine-Gordon_kappa}
            \frac{d \kappa_n}{d t} = - \hat{\alpha} \left( \sin{\doubleu_{n}} + \sin{\doubleu_{n-1}} \right),
          \end{equation}
          where $\hat{\alpha} = \frac{1 - \cos{\nu_*(0)}}{\rho}$.
    \item The deformation of the frame $\Phi_n(t) = [T_n(t), \, \widetilde{N}_n(t), \, b_n(t)]$ is given by
          \begin{equation}\label{def:deformation_frame}
            \begin{split}
            \frac{d {\Phi}_n}{d t} &= \Phi_n M_n, \\
            M_n &=\frac{1}{\rho}
                \begin{bmatrix}
                  0 & \left( 1 - \cos{\nu_*(0)} \right)\sin{\doubleu_n} & \sin{\nu_*(0)}\sin{\doubleu_n} \\[0.3pc]
                  - \left( 1 - \cos{\nu_*(0)} \right)\sin{\doubleu_n} & 0 & -\sin{\nu_*(0)}\cos{\doubleu_n} \\[0.3pc]
                  - \sin{\nu_*(0)}\sin{\doubleu_n} & \sin{\nu_*(0)}\cos{\doubleu_n} & 0
                \end{bmatrix}.
              \end{split}
          \end{equation}
  \end{enumerate}
\end{proposition}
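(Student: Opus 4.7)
My plan is to mirror the proof of Proposition~\ref{prop1}, tracking the sign differences that arise from replacing condition \eqref{condition:kappa_psi2} by \eqref{condition:kappa_psi1}. For isoperimetricity, the calculation displayed just before the two propositions shows that $d\varepsilon_n/dt=0$ is equivalent to $\cos(\kappa_{n+1}+\doubleu_{n+1})=\cos\doubleu_n$. Under \eqref{condition:kappa_psi1} one has the stronger identity $\kappa_{n+1}+\doubleu_{n+1}=\doubleu_n$, so this is automatic (this is the key sign asymmetry with Proposition~\ref{prop1}, where instead $\kappa_{n+1}+\doubleu_{n+1}=-\doubleu_n$). Hence the $\varepsilon_n$ half of part~(1) holds; the torsion half will follow from the compatibility calculation below.

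Next I set $M_n:=\Phi_n^{-1}\,d\Phi_n/dt$, which is skew-symmetric since $\Phi_n\in\mathrm{\it SO}(3)$, and determine its first column. From $T_n=\varepsilon^{-1}(\gamma_{n+1}-\gamma_n)$, \eqref{def:deformation_curve}, and $\Phi_{n+1}=\Phi_n L_n$ with $L_n=R_1(-\nu_*(0))R_3(\kappa_{n+1})$, using $\kappa_{n+1}+\doubleu_{n+1}=\doubleu_n$ to collapse the trigonometric expressions, one obtains
\[
\Phi_n^{-1}\frac{dT_n}{dt}=\frac{1}{\rho}\begin{bmatrix}0 \\ -(1-\cos\nu_*(0))\sin\doubleu_n \\ -\sin\nu_*(0)\sin\doubleu_n\end{bmatrix},
\]
matching the first column of $M_n$ in \eqref{def:deformation_frame}. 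Skew-symmetry then also fixes the first row, so only the entry $M_n[3,2]$ remains undetermined.

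Finally I pin down $M_n[3,2]$, $d\nu_{n+1}/dt$ and $d\kappa_{n+1}/dt$ simultaneously via the zero-curvature-type relation obtained by differentiating $\Phi_{n+1}=\Phi_n L_n$ in $t$:
\[
L_n M_{n+1}-M_n L_n=\frac{dL_n}{dt}.
\]
This is a skew-symmetric $3\times 3$ identity, equivalent to three scalar equations. Substituting the partially known form of $M_n$ (with the single unknown $M_n[3,2]=f(\doubleu_n,\nu_*(0))$) and the analogous form for $M_{n+1}$, the three equations determine all three scalar unknowns, yielding $d\nu_{n+1}/dt=0$ (completing part~(1)), $f(\doubleu_n,\nu_*(0))=\sin\nu_*(0)\cos\doubleu_n/\rho$ (completing part~(3)), and $d\kappa_{n+1}/dt=-\hat\alpha(\sin\doubleu_{n+1}+\sin\doubleu_n)$, which is \eqref{eq:sine-Gordon_kappa} after the index shift $n\mapsto n-1$.

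The argument is conceptually identical to that of Proposition~\ref{prop1}; the real work is trigonometric bookkeeping. The sign flip between \eqref{condition:kappa_psi1} and \eqref{condition:kappa_psi2} propagates through every identity, converting the coefficient $1+\cos\nu_*(0)$ in Proposition~\ref{prop1} into $1-\cos\nu_*(0)$ here, and turning the difference $\sin\doubleu_{n-1}-\sin\doubleu_n$ into the sum $-(\sin\doubleu_n+\sin\doubleu_{n-1})$ in the $\kappa$-equation. Tracking these signs consistently through the compatibility step is the main source of potential error.
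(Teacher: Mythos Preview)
Your proposal is correct, but it follows a somewhat different route from the paper's own argument. The paper (proving Proposition~\ref{prop1} and asserting Proposition~\ref{prop2} is identical) computes $\dot T_n$ as you do, but then obtains $\dot\kappa_n$ by differentiating $\cos\kappa_n=\langle T_n,T_{n-1}\rangle$, computes $\dot b_n$ directly from the cross-product definition $b_n=(T_{n-1}\times T_n)/|T_{n-1}\times T_n|$, assembles $M_n$ from its columns, and finally gets $\dot\nu=0$ by differentiating $\cos\nu=\langle b_n,b_{n-1}\rangle$. Your approach instead extracts everything from the zero-curvature relation $L_nM_{n+1}-M_nL_n=\dot L_n$; this is the more ``integrable-systems'' viewpoint, and it avoids the somewhat lengthy cross-product differentiation. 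One caveat on your write-up: the phrase ``three equations determine all three scalar unknowns'' is loose, since at a fixed index $n$ there are actually four unknowns ($M_n[3,2]$, $M_{n+1}[3,2]$, $\dot\nu_{n+1}$, $\dot\kappa_{n+1}$). What makes your argument go through is that when one eliminates $\dot\nu$ between the $(1,3)$ and $(2,3)$ entries of (the skew form of) the compatibility equation, the $M_n[3,2]$ contribution cancels, and $M_{n+1}[3,2]$ is determined outright as $\sin\nu_*(0)\cos\doubleu_{n+1}/\rho$; the $(1,2)$ entry then gives $\dot\kappa_{n+1}$ independently of either $M[3,2]$, and feeding the now-known $M_n[3,2]$ back into the $(1,3)$ entry yields $\dot\nu_{n+1}=0$. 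It would strengthen your exposition to say this explicitly rather than to posit the ansatz $M_n[3,2]=f(\doubleu_n,\nu_*(0))$ from the outset.
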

\begin{proof}
We only prove Proposition \ref{prop1} since Proposition \ref{prop2} can be proved in
the same manner.  We first show the second and the third statements.  We denote $\dot{f} = \frac{d
f}{d t}$, $\nu = \nu_*(0)$ and $\varepsilon = \varepsilon_*(0)$ for simplicity.  Since $\varepsilon$
is a constant by the preceding argument, the deformation of $T_n$ can be computed from
\eqref{def:deformation_curve} and \eqref{condition:kappa_psi1} as
  \begin{align}
    \dot{T_n} &=
      \frac{1}{\rho} \Phi_n
      \left( L_n
      \begin{bmatrix}
        \cos{\doubleu_{n+1}} \\ \sin{\doubleu_{n+1}} \\ 0
      \end{bmatrix}
    - \begin{bmatrix}
        \cos{\doubleu_n} \\ \sin{\doubleu_n} \\ 0
      \end{bmatrix}
      \right)\nonumber\\
      &=\frac{1}{\rho} \Phi_n
      \begin{bmatrix}
        \cos{\left( \kappa_{n+1} + \doubleu_{n+1} \right)} - \cos{\doubleu_n} \\
        \cos{\nu} \sin{\left( \kappa_{n+1} + \doubleu_{n+1} \right)} - \sin{\doubleu_n} \\
        -\sin{\nu} \sin{\left( \kappa_{n+1} + \doubleu_{n+1} \right)}
      \end{bmatrix}\nonumber\\
      &=
      \frac{1}{\rho} \Phi_n
      \begin{bmatrix}
        0 \\
        - \left( 1 + \cos{\nu} \right) \sin{\doubleu_n} \\
          \sin{\nu} \sin{\doubleu_n}
      \end{bmatrix}.\label{eq:dotT}
  \end{align}
Differentiating $\cos{\kappa_n} = \langle T_n, T_{n-1} \rangle$ with respect to $t$, we have
\begin{equation}\label{eq:dotkappa}
    -\dot{\kappa_n} \sin{\kappa_n} =
    \langle \dot{T_n}, T_{n-1} \rangle + \langle T_n, \dot{T}_{n-1} \rangle.
\end{equation}
 Noting
  \begin{equation}
    T_{n-1} = \Phi_n L_{n-1}^{-1}
    \begin{bmatrix}
      1 \\
      0 \\
      0
    \end{bmatrix}
    =
    \Phi_n
    \begin{bmatrix}
      \cos{\kappa_n} \\
      - \sin{\kappa_n} \\
      0
    \end{bmatrix},
  \end{equation}
  and
  \begin{align}
    \dot{T}_{n-1} &=
    \frac{1}{\rho} \Phi_n L_n^{-1}
    \begin{bmatrix}
      0 \\
      - \left( 1 + \cos{\nu} \right) \sin{\doubleu_{n-1}} \\
        \sin{\nu} \sin{\doubleu_{n-1}}
    \end{bmatrix}\nonumber\\
    &=
    \frac{1}{\rho} \Phi_n
    \begin{bmatrix}
      - \left( 1 + \cos{\nu} \right) \sin{\kappa_n} \sin{\doubleu_{n-1}} \\
      - \left( 1 + \cos{\nu} \right) \cos{\kappa_n} \sin{\doubleu_{n-1}}\\
      - \sin{\nu} \sin{\doubleu_{n-1}}
    \end{bmatrix},
  \end{align}
we get from \eqref{eq:dotT} and \eqref{eq:dotkappa}
  \begin{equation}\label{eq:sine_gordon1}
    \dot{\kappa_n} = \frac{1 + \cos{\nu}}{\rho}
    \left( \sin{\doubleu_{n-1}} - \sin{\doubleu_{n}} \right),
  \end{equation}
which is equivalent to \eqref{eq:sine-Gordon_kappa}.
This proves the second statement.
Next, we see from the definition of $b_n$
  \begin{equation}\label{eq:dotB}
    \dot{b}_n =
    \frac{d}{d t}\left( \frac{1}{\left| T_{n-1} \times T_n \right|}\right) T_{n-1} \times T_n
    + \frac{1}{\left| T_{n-1} \times T_n \right|}
    \left(
    \dot{T}_{n-1} \times T_n + T_{n-1} \times \dot{T}_n
    \right).
  \end{equation}
  Noting
  \begin{equation}
    T_{n-1} \times T_n = \Phi_n
    \begin{bmatrix}
      \cos{\kappa_n} \\
      - \sin{\kappa_n} \\
      0
    \end{bmatrix}
    \times
    \Phi_n
    \begin{bmatrix}
      1 \\
      0 \\
      0
    \end{bmatrix}
    =
    \Phi_n
    \begin{bmatrix}
      0 \\
      0 \\
      \sin{\kappa_n}
    \end{bmatrix},
  \end{equation}
  \begin{align}
    \dot{T}_{n-1} \times T_n &=
    \frac{1}{\rho} \Phi_n
    \begin{bmatrix}
      - \left( 1 + \cos{\nu} \right) \sin{\kappa_n} \sin{\doubleu_{n-1}} \\
      - \left( 1 + \cos{\nu} \right) \cos{\kappa_n} \sin{\doubleu_{n-1}}\\
      - \sin{\nu} \sin{\doubleu_{n-1}}
    \end{bmatrix}
    \times
    \Phi_n
    \begin{bmatrix}
      1 \\
      0 \\
      0
    \end{bmatrix}\nonumber\\
    &=
    \frac{1}{\rho} \Phi_n
    \begin{bmatrix}
      0 \\
      - \sin{\nu} \sin{\doubleu_{n-1}} \\
       \left( 1 + \cos{\nu} \right) \cos{\kappa_n} \sin{\doubleu_{n-1}}
    \end{bmatrix},
  \end{align}
  and
  \begin{align}
    T_{n-1} \times \dot{T}_n &=
    \Phi_n
    \begin{bmatrix}
      \cos{\kappa_n} \\
      -\sin{\kappa_n} \\
      0
    \end{bmatrix}
    \times
    \frac{1}{\rho} \Phi_n
    \begin{bmatrix}
      0 \\
      - \left( 1 + \cos{\nu} \right) \sin{\doubleu_n} \\
        \sin{\nu} \sin{\doubleu_n}
    \end{bmatrix}\nonumber\\
    &=
    \frac{1}{\rho} \Phi_n
    \begin{bmatrix}
      - \sin{\nu} \sin{\kappa_n} \sin{\doubleu_n} \\
      - \sin{\nu} \cos{\kappa_n} \sin{\doubleu_n} \\
      - \left( 1 + \cos{\nu} \right) \cos{\kappa_n} \sin{\doubleu_{n}}
    \end{bmatrix},
  \end{align}
  we get from  \eqref{eq:sine_gordon1} and \eqref{eq:dotB}
  \begin{equation}\label{def:dotB}
    \dot{b}_n =
    \frac{1}{\rho} \Phi_n
    \begin{bmatrix}
      - \sin{\nu} \sin{\doubleu_n} \\
        \sin{\nu} \cos{\doubleu_n}\\
      0
    \end{bmatrix}.
\end{equation}
We immediately obtain $\dot{\widetilde{N_n}}$ from \eqref{eq:dotT} and \eqref{eq:dotB} as
\begin{equation}\label{eq:dotN}
  \dot{\widetilde{N}} = \dot{b_n} \times T_n + b_n \times \dot{T_n}
  = \frac{1}{\rho} \Phi_n
    \begin{bmatrix}
      (1 + \cos{\nu}) \sin{\doubleu_n} \\
      0 \\
      - \sin{\nu} \cos{\doubleu_n}
    \end{bmatrix}.
\end{equation}
Then we have \eqref{def:deformation_frame} from \eqref{eq:dotT}, \eqref{eq:dotB} and
\eqref{eq:dotN}, which proves the third statement. Finally, differentiating $\cos{\nu} = \langle b_n
, b_{n-1} \rangle$ with respect to $t$, it follows from \eqref{def:dotB} and
\eqref{eqn:kappa_psi_cos} that
\begin{displaymath}
 -\dot{\nu} \sin{\nu} =
\langle \dot{b_n}, b_{n-1} \rangle + \langle b_n, \dot{b}_{n-1} \rangle
=
- \frac{\sin^2{\nu}}{\rho}\big(\cos{\left( \kappa_n + \doubleu_n \right) - \cos{\doubleu_{n-1}}} \big)
 = 0,
\end{displaymath}
which implies $\dot{\nu} = 0$. This completes the proof of the first statement.
\end{proof}
\begin{remark}
The condition \eqref{condition:kappa_psi2} suggests the {\it potential function} $\theta_n$ in
Proposition \ref{prop1} such that we have
    \begin{equation}
      \kappa_n = \frac{\theta_{n+1} - \theta_{n-1}}{2}, \quad
      \doubleu_n = \frac{\theta_{n} - \theta_{n+1}}{2}.
    \end{equation}
Then, \eqref{eq:sine-Gordon_kappa2} is rewritten as
  \begin{equation}
      \frac{d}{dt} \left( \theta_{n+1}+\theta_n \right) =
      2\alpha \sin{\left( \frac{\theta_{n+1} - \theta_{n}}{2}\right)}.
    \end{equation}
To the best of the authors' knowledge, this is a novel form of the {\it semi-discrete potential mKdV
equation}.  In fact, the continuum limit $\alpha=\frac{2}{\epsilon}$, $X=\epsilon n + t$,
$T=\frac{\epsilon^2}{12}t$, $\epsilon\to 0$ yields the potential mKdV equation
  \begin{equation}
   \theta_T + \frac{1}{2}(\theta_{X})^3 + \theta_{XXX}=0.
  \end{equation}
Similarly, introducing the potential function $\theta_n$ in Proposition \ref{prop2}
such that
  \begin{equation}
    \kappa_n = \frac{\theta_{n+1} - \theta_{n-1}}{2}, \quad
    \doubleu_n = -\frac{\theta_{n+1} + \theta_n}{2},
  \end{equation}
  suggested by \eqref{condition:kappa_psi1}, we can rewrite \eqref{eq:sine-Gordon_kappa} as
\begin{equation}\label{eq:sine_Gordon}
    \frac{d}{dt} \left( \theta_{n+1}-\theta_n \right) =
    2\alpha \sin{\left( \frac{\theta_{n+1} + \theta_{n}}{2}\right)},
\end{equation}
which is nothing but the {\it semi-discrete sine-Gordon equation} \cite{Boiti:dsG,Orfanidis1, Orfanidis2}.
\end{remark}
\begin{remark}
In the above argument, we assume that the speed of the deformation $\rho$ in
\eqref{def:deformation_curve} is a constant and does not depend on $n$.  Then, by demanding that the
deformation preserve arc length (\eqref{condition:kappa_psi2} or \eqref{condition:kappa_psi1}),
it followed that the torsion angle is also preserved.  Conversely, it seems to be the case that for
the deformation to preserve both the arc length and the torsion angle, the speed $\rho$ is required not
to depend on $n$.
\end{remark}

%%%%%%%%%%%%%%%%%%%%
\begin{remark}[Continuum limit]
The isoperimetric torsion-preserving discrete deformations for the discrete space curves of
constant torsion have been considered in \cite{inoguchi2014discrete}, where the deformations are
governed by the discrete sine-Gordon and the discerte mKdV equations.  It is possible to obtain the
continuous deformations discussed in this section by suitable continuum limits from those discrete
deformations. More precisely, let $\gamma_n^m$ ($m\in\mathbb{Z}$) be a family of discrete curves
obtained by applying the discrete deformations $m$ times to $\gamma_n^0=\gamma_n$, where $\gamma_n$ is
the discrete curve with a constant speed $\varepsilon$ and a constant torsion angle $\nu$. Then
the above discrete deformation is given by
\begin{equation}\label{def:deformation_curve_discrete}
 \gamma_n^{m+1} =
\gamma_n^m + \delta_m \left( \cos{\doubleu_n^m} T_n^m + \sin{\doubleu_n^m} N_n^m \right).
\end{equation}
Then if we choose $\delta_m$ and $\doubleu_0^m$ so that the sign of
$\sigma_n^m=\sin{(\doubleu_{n+1}^m+\kappa_{n+1}^m-\doubleu_{n-1}^m)}$ does not depend on
$n$, the isoperimetric condition and the compatibility condition of the Frenet frame
yield the discrete mKdV equation
\begin{equation}\label{eq:discrete_mKdV}
  \frac{\doubleu_{n+1}^{m+1} - \doubleu_n^m}{2}
  = \arctan{\left( \frac{b + a}{b - a} \tan{\frac{\doubleu_n^{m+1}}{2}} \right)}
  - \arctan{\left( \frac{b + a}{b - a} \doubleu_{n+1}^m \right)},
\end{equation}
when $\sigma_n^m>0$, and the discrete sine-Gordon equation
\begin{equation}\label{eq:discrete_sG}
  \frac{\doubleu_{n+1}^{m+1} + \doubleu_n^m}{2}
  = \arctan{\left( \frac{b + a}{b - a} \tan{\frac{\doubleu_n^{m+1}}{2}} \right)}
    + \arctan{\left( \frac{b + a}{b - a} \doubleu_{n+1}^m \right)},
\end{equation}
when $\sigma_n^m<0$  with
\begin{equation}
%  \lambda = \frac{\sin{\nu}}{\varepsilon} \\
a = \left( 1 + \tan^2{\frac{\nu}{2}} \right) \varepsilon,\quad
b = \left( 1 + \tan^2{\frac{\nu}{2}} \right) \delta.
\end{equation}
For the discrete mKdV equation \eqref{eq:discrete_mKdV}, in the limit of
\begin{equation}
  a = \frac{2 \varepsilon}{\rho \alpha},\quad m = \frac{\rho}{\varepsilon \delta}  t,\quad
  b \rightarrow 0\ (\delta \rightarrow 0),
\end{equation}
\eqref{eq:discrete_mKdV} is reduced to
the semi-discrete mKdV equation \eqref{eq:sine-Gordon_kappa2}.
Similarly, the discrete sine-Gordon equation \eqref{eq:discrete_sG} is reduced to
the semi-discrete sine-Gordon equation \eqref{eq:sine-Gordon_kappa} in the limit
\begin{equation}
  a = \frac{\alpha \rho}{\varepsilon}, \quad m = \frac{\rho}{\varepsilon \delta}  t,\quad
  b \rightarrow \infty\ (\delta \rightarrow 0).
\end{equation}
Obviously, the discrete deformation equation of the discrete curve
\eqref{def:deformation_curve_discrete} is reduced to the continuous deformation
equation \eqref{def:deformation_curve}. Moreover, it is easily verified that
the discrete deformation equations of the Frenet frame in \cite{inoguchi2014discrete}
are reduced to \eqref{def:deformation_frame1} and \eqref{def:deformation_frame}.
\end{remark}

%%%%%%%%%%%%%%%%%%%%%
\subsection{Turning-over motion of Kaleidocycles}\label{sec:motion_kaleidocycle}
An $N$-Kaleidocycle corresponds to a closed discrete curve $\gamma$ of length $N$
having a constant speed $\varepsilon$ and a constant torsion angle $\nu$ whose $b$ is oriented.
Since $\gamma$ is closed, for \eqref{def:deformation_curve} to define a deformation of $\gamma$,
we need a periodicity condition $\doubleu_{n+N}=\doubleu_n$ (when oriented)
or $\doubleu_{n+N}=-\doubleu_n$ (when anti-oriented) for any $n\in \Z$.

When $N$ is odd and the Kaleidocycle is oriented, the equation \eqref{condition:kappa_psi2} together with $\doubleu_0=\doubleu_N$
forms a linear system for $\doubleu_n \ (0\le n\le N)$ which is regular.
Therefore, we can find $\doubleu_n \ (0\le n\le N)$ uniquely as the solution to the system.
Then, the equation \eqref{def:deformation_curve} generates a deformation of
$\gamma$ which preserves the segment length and the torsion angle, while $\gamma$ remains closed.
That is, the turning-over motion of the Kaleidocycle is governed by the semi-discrete mKdV equation \eqref{eq:sine-Gordon_kappa2}
(see Fig. \ref{fig:centreline-curve}).
Note that by \eqref{eq:sine-Gordon_kappa2}, the total curvature angle $\sum_{i=0}^{N-1} \kappa_n(t)$ is also preserved.
\begin{figure}[ht]
    \center
  \includegraphics[width=6cm]{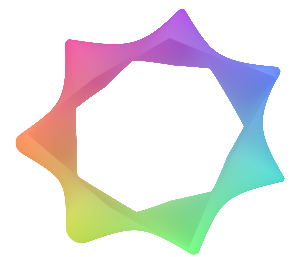}
  \includegraphics[width=6cm]{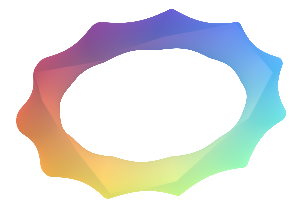}
  \caption{Surface drawn by the evolution of the center curves of Kaleidocycles with $N=7$ and $N=25$ respectively.}
  \label{fig:centreline-curve}
\end{figure}

When the Kaleidocycle is anti-oriented,
the equation \eqref{condition:kappa_psi1} together with $\doubleu_0=-\doubleu_N$
forms a linear system for $\doubleu_n \ (0\le n\le N)$ which is regular for any $N$.
Similarly to the above, in this case the turning-over motion of the Kaleidocycle is governed by the semi-discrete sine-Gordon
 equation \eqref{eq:sine-Gordon_kappa}.

Note that if an $N$-Kaleidocycle with an odd $N$ is anti-oriented $b_0=-b_N$,
we can define an oriented Kaleidocycle by
taking its ``mirrored image'' $b_i \mapsto (-1)^i b_i$
which conforms to the definition \ref{dfn:kaleidocycle}.
Thus, for an odd Kaleidocycle, both the semi-discrete mKdV equation and the semi-discrete sine-Gordon
 equation generate the turning-over motion.

% !TEX root = linkage_deform.tex
\section{Extreme Kaleidocycles}\label{sec:final}
We defined Kaleidocycles in Def. \ref{dfn:kaleidocycle} and saw the torsion angle cannot be chosen
arbitrarily.  A natural question is for what torsion angle $\nu$ there exists an $N$-Kaleidocycle
for each $N$.  It seems there are no Kaleidocycles with $\nu \in (0,\pi)$ for $N\le 5$.  For $6\le N \le 50$, we
conducted numerical experiments with \cite{code} and found that there exists $c^*_N\in [0,1]$ which
satisfy the following.  Recall that $\pi_c: \mathcal{M}_N\to \R$ is the projection of the
configuration space $\mathcal{M}_N$ onto the $c$-axis, where $c=\cos\nu$.
\begin{enumerate}
\item When $N$ is odd,
$\pi_c(\mathcal{M}_N^+)=[-c^*_N,1]$ and $\pi_c(\mathcal{M}_N^-)=[-1,c^*_N]$.
\item When $N$ is even, $\pi_c(\mathcal{M}_N^+)=[-1,1]$ and $\pi_c(\mathcal{M}_N^-)=[-c^*_N,c^*_N]$.
\end{enumerate}
Moreover, $N \arccos{(c^*_N)}$ converges monotonously to a constant, where $\arccos$ takes the principal value in $[0,\pi]$.
Interestingly, at the boundary values $c=\pm c^*_N$,
the fibre of $\pi_c$ seems to be exactly one-dimensional for any $N\ge 6$.
This means, they are exactly the one-dimensional orbits defined in \S \ref{sec:motion_kaleidocycle}.

We summarise our numerical findings.
\begin{conjecture}
Let $N\ge 6$.
We have the following:
\begin{enumerate}
\item The space $\pi_c^{-1}(c^*_N)\cap \mathcal{M}_N^-$ is a circle.  Moreover, the involution
defined by $b_n \mapsto (-1)^n b_n$ induces isomorphisms $\pi_c^{-1}(-c^*_N)\cap \mathcal{M}_N^+
\simeq \pi_c^{-1}(c^*_N)\cap \mathcal{M}_N^-$ when $N$ is odd and $\pi_c^{-1}(-c^*_N)\cap
\mathcal{M}_N^- \simeq \pi_c^{-1}(c^*_N)\cap \mathcal{M}_N^-$ when $N$ is even.
\item The orbit of any element $\gamma\in \pi_c^{-1}(c^*_N)\cap \mathcal{M}_N^-$ of the flow generated 
by the semi-discrete sine-Gordon equation described in  \S \ref{sec:motion_kaleidocycle}
coincides with $\pi_c^{-1}(c^*_N)\cap \mathcal{M}_N^-\simeq S^1$.
\item When $N$ is odd, the orbit of any element $\gamma\in \pi_c^{-1}(-c^*_N)\cap \mathcal{M}_N^+$
generated by the semi-discrete mKdV equation described in \S \ref{sec:motion_kaleidocycle} coincides with
$\pi_c^{-1}(-c^*_N)\cap \mathcal{M}_N^+\simeq S^1$.
Moreover, on $\pi_c^{-1}(-c^*_N)\cap \mathcal{M}_N^+$ we have $\sum_{n=0}^{N-1}\kappa_n=0$ and we can also define its deformation by
the semi-discrete sine-Gordon equation if we define $\doubleu$ by \eqref{condition:kappa_psi1} and
$\sum_{n=0}^{N-1}\dot{\kappa}_n=2\alpha\sum_{n=0}^{N-1}\sin(\doubleu_n)=0$.  The orbit coincides
with $\pi_c^{-1}(-c^*_N)\cap \mathcal{M}_N^+$ as well.  That is, for an oriented Kaleidocycle with
$\nu=\arccos(-c^*_N)$, we can define two motions one by the semi-discrete sine-Gordon equation
\eqref{condition:kappa_psi1}, the other by the semi-discrete mKdV equation
\eqref{condition:kappa_psi2}, and they coincide up to rigid transformations.
\item Any strip $(\gamma^{b,\varepsilon},b)$ corresponding to $b\in \pi_c^{-1}(c^*_N)\cap
\mathcal{M}_N^-$ is a 3-half twisted M\"obius strip (see \S \ref{sec:topology}).
There are no Kaleidocycles with one or two half twisting.
\item When $N$ tends to infinity, $N \arccos{c^*_N}$ converges to a constant.
There exists a unique limit curve up to congruence
for any sequence $\gamma_N\in \pi_c^{-1}(c^*_N)\cap \mathcal{M}_N^-$, and it has a constant torsion up to sign.
\end{enumerate}
\end{conjecture}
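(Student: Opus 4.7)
The plan is to combine a critical-point analysis of the projection $\pi_c:\mathcal{M}_N^-\to \R$ with the integrable structure supplied by Propositions~\ref{prop1} and~\ref{prop2}. For part~(1), $c=c^*_N$ is an extreme value of the image of $\pi_c$ in the non-degenerate regime $\nu\in(0,\pi)$, hence a critical value: at a preimage $b\in \pi_c^{-1}(c^*_N)\cap \mathcal{M}_N^-$, the tangent space to $\mathcal{M}_N^-$ is contained in $\ker(d\pi_c)$. The generic fiber has dimension $N-6$, and the goal is to show that the defining system~\eqref{eq:DH} augmented by $c$-derivatives acquires corank $N-7$ at the extreme value, yielding a one-dimensional locus. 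The involution $b_n\mapsto(-1)^n b_n$ sends $\langle b_n,b_{n+1}\rangle=c$ to $-c$, flips orientation by $(-1)^N$, and preserves closure $\sum b_{n+1}\times b_n=0$ up to sign, immediately producing the claimed isomorphism between extreme fibers.

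For parts~(2) and~(3), Propositions~\ref{prop1}--\ref{prop2} already ensure that the semi-discrete mKdV and sine-Gordon flows preserve $\varepsilon_n$ and $\nu_n$, while closedness of $\gamma$ is maintained by solving the linear system for $\doubleu_n$ described in \S\ref{sec:motion_kaleidocycle}. Each flow is therefore tangent to $\pi_c^{-1}(\pm c^*_N)\cap \mathcal{M}_N^\pm$, and (granting part~(1)) fills the circle provided its vector field is nowhere zero. Nonvanishing reduces via~\eqref{eq:sine-Gordon_kappa} or \eqref{eq:sine-Gordon_kappa2} to checking that $\sin\doubleu_n$ is not identically zero, which I would establish by combining the explicit linear system for $\doubleu$ with the non-degeneracy of a generic extreme configuration. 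The identity $\sum_{n=0}^{N-1}\kappa_n=0$ needed in part~(3) should follow from a discrete reflection symmetry $n\mapsto -n$ that preserves the extreme oriented fiber while negating the signed curvature sum; the coincidence of the two flows then reduces to the fact that nonvanishing vector fields on $S^1$ are unique up to reparametrization, so their orbits agree and, modulo a choice of origin, coincide up to rigid motion.

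For part~(4), the half-twist count of the developable strip $(\gamma^{b,\varepsilon},b)$ is a homotopy invariant, hence constant on the connected component $\pi_c^{-1}(c^*_N)\cap \mathcal{M}_N^-\simeq S^1$. I would compute this invariant on one highly symmetric representative, extending the direct verification at $N=6$ for the threefold symmetric Bricard linkage, and rule out one or two half twists via an obstruction argument on the framing induced by $b$. For part~(5), I would rescale $\varepsilon=L/N$ and set $\nu_N=\arccos c^*_N$; if $N\nu_N$ converges as conjectured, the continuum limit described in the Remark following Proposition~\ref{prop2} yields a smooth closed curve of fixed length $L$ with constant torsion satisfying a stationary reduction of the continuous sine-Gordon equation, and the Calini--Ivey classification of such curves should single out a unique limit up to congruence.

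The main obstacle is part~(1): controlling the dimension of $\pi_c^{-1}(c^*_N)\cap \mathcal{M}_N^-$ at the critical value. Generic fibers are cut transversally out of~\eqref{eq:DH}, but at the critical value the linearized system becomes degenerate, and computing its precise corank is sensitive to global structure of $\mathcal{M}_N$ that is not evident from the defining equations. The cleanest route I see is to construct a complete set of conserved quantities for the semi-discrete sine-Gordon flow on the closed configuration space, apply a Liouville--Arnold type argument to foliate generic level sets by tori of dimension $(N-6)/2$, and then show that at the extreme torsion enough first integrals become dependent to collapse the common level set to a single circle. Parts~(4) and~(5), although strongly suggested by the numerics, appear to require genuinely new ingredients beyond this integrability toolkit.
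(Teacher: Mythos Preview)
The statement you are attempting to prove is labelled \emph{Conjecture} in the paper, not a theorem. The authors introduce it with ``We summarise our numerical findings'' and offer no proof whatsoever; the entire content of \S\ref{sec:final} up to this point is a report on computer experiments with \cite{code} for $6\le N\le 50$. There is therefore no ``paper's own proof'' to compare your proposal against.

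Given that, your write-up should be read as a research plan rather than a proof, and on that reading its honest self-assessment is accurate. The involution argument in part~(1) is correct and essentially trivial (the paper already uses it informally in \S\ref{sec:motion_kaleidocycle}), and the tangency of the flows to the fibres of $\pi_c$ in parts~(2)--(3) is a genuine consequence of Propositions~\ref{prop1}--\ref{prop2}. But everything that makes the conjecture nontrivial---that the extreme fibre has dimension exactly one, that the flow vector field is nowhere zero on it, that $\sum\kappa_n=0$ at the extreme oriented value, that the twist number is forced to be three, and the convergence of $N\arccos c^*_N$---remains unproved in your sketch, and you say so yourself. The Liouville--Arnold idea is appealing but speculative: no complete set of commuting integrals for the closed semi-discrete sine-Gordon flow on $\mathcal{M}_N$ is exhibited in the paper or in your proposal, and without them the foliation-collapse argument has no foundation. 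In short, you have not found a gap in the paper's argument, because the paper makes no argument; and your own outline, while reasonable as a strategy, does not close the problem.
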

We call those Kaleidocycles having the extremal torsion angle \emph{extreme Kaleidocycles}.

\begin{remark}
The extreme Kaleidocycles were discovered by the first named author and his collaborators
\cite{kaji,patent}.  In particular, when it is anti-oriented, it is called the \emph{M\"obius
Kaleidocycle} because they are a discrete version of the M\"obius strip with a $3\pi$-twist.
Coincidentally, M\"obius is the first one to give the dimension counting formula for generic
linkages \cite{mobius} (although it is often attributed to Maxwell), and our M\"obius Kaleidocycles
are exceptions to his formula.
\end{remark}

% chapter <=> paper
We end this chapter with a list of interesting properties, questions and some supplementary materials
of Kaleidocycles for future research.

%%%%%%%%%%%%%%%
\subsection{Kinematic energy}
Curves with adapted frames serve as a model of elastic rods
and are studied, for example, in Langer and Singer \cite{Langer-Singer} in a continuous setting,
and in \cite{DiscreteElasticRods} in a discrete setting.
Serial and closed hinged networks are discrete curves with specific frames
as we saw in \S \ref{circle-system}.
From this viewpoint, we consider some energy functionals defined for discrete curves with frames
and investigate how they behave on the configuration space $\mathcal{M}_N$ of Kaleidocycles.

Let $\gamma$ be a constant speed discrete closed curve of length $N$.
The {\em elastic energy} $\mathcal{E}_e$ and the {\em twisting energy} $\mathcal{E}_t$ are defined respectively by
\[
\mathcal{E}_e(\gamma) = \sum_{n=0}^{N-1} \kappa_n^2, \qquad 
\mathcal{E}_t(\gamma) = \sum_{n=0}^{N-1}\nu_n^2. 
\]
By the definition of Kaleidocycle, $\mathcal{E}_t$ takes a constant value when a Kaleidocycle
undergoes any motion.

Interestingly, a numerical simulation by \cite{code} suggests that on $\pi_c^{-1}(c^*_N) \cap
\mathcal{M}_N^-$ (and also on $\pi_c^{-1}(-c^*_N) \cap \mathcal{M}_N^+$ for an odd $N$ and 
on $\pi_c^{-1}(-c^*_N) \cap \mathcal{M}_N^-$ for an even $N$)
for a fixed $N$, $\mathcal{E}_e$ takes an almost constant value.  The summands of $\mathcal{E}_e$ are
locally determined and vary depending on the states, however, the total is almost stable so that
only small force should be applied to rotate the Kaleidocycle.  It is also noted that the sum
$\mathcal{E}_e+\mathcal{E}_t$ is a discrete version of the elastic energy of the Kirchoff rod
defined by the strip, and it also takes almost constant values.

Similarly, 
we introduce the following three more energy functionals, which are observed to take almost constant values 
on $\pi_c^{-1}(-c^*_N) \cap \mathcal{M}_N^+$.
The {\em dipole energy} is defined to be
\[
\mathcal{E}_d(\gamma) := 2\left( \sum_{i<j} \dfrac{\langle b_i, b_j\rangle}{|\gamma_i-\gamma_j|^3}-3\dfrac{\langle b_i,\gamma_i-\gamma_j\rangle
\langle b_j,\gamma_i-\gamma_j\rangle}{|\gamma_i-\gamma_j|^5}\right).
\]
The {\em Coulomb energy} with an exponent $\alpha>0$ is defined to be
\[
\mathcal{E}_c(\gamma) := 2\sum_{i<j} \dfrac{1}{|\gamma_i-\gamma_j|^\alpha}.
\]
The \emph{averaged hinge magnitude} is defined to be
\[
\mathcal{E}_a(\gamma) := \dfrac{1}{N}\left| \sum_{n=0}^{N-1} b_n \right|.
\]
%All these energies are found numerically to be almost stable on $\pi_c^{-1}(c^*_N) \cap\mathcal{M}_N^+$.  
However, we have no rigorous statements about them.  It may be the case that one
needs some other discretisation of the continuous counterparts of these energies to show their
behaviour theoretically.  It is also interesting to characterise or generalise extreme Kaleidocycles
in terms of variational calculus on the space of discrete closed curves.

%%%%%%%%%%%
\subsection{Topological invariants}\label{sec:topology}
As noted in \cite{Langer-Singer}, for a curve
to be closed, topological constraints come into the story.
This quantises some continuous quantity and makes it an isotopy invariant.

Let $\gamma$ be a constant speed discrete closed curve of length $N$.
First, interpolate $\gamma_n$ and $b_n$ for  $(0\le n < 2N)$ linearly
to obtain a continuous vector field $\bar{b}$ defined on the polygonal 
curve $\bar{\gamma}$, which goes around the polygon twice.
We define the \emph{twisting number} $\Tt$ of $\gamma$ as the linking number 
between twice the centre curve $\bar{\gamma}$ and the boundary curve $\bar{\gamma}+\epsilon\bar{b}$,
where $\epsilon>0$ is small enough.
\begin{figure}[ht]
  \center
\includegraphics[width=4.5cm]{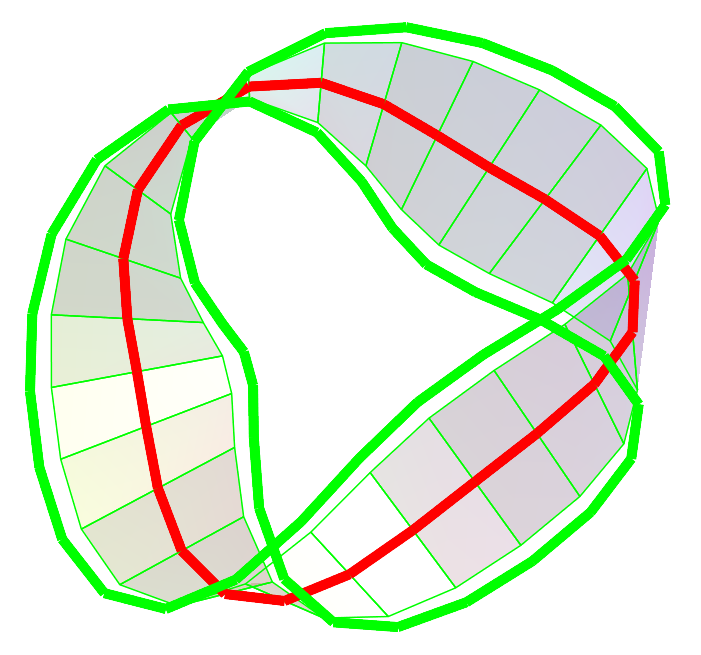}
\caption{Twisting number as the linking number between centre and boundary curves.}\label{fig:twisting}
\end{figure}
Intuitively, it is the number of half-twists of the strip defined by $\gamma$ and $b$.
The C\v alug\v areanu-White formula 
relates this topological invariant to the sum of two conformal invariants
and provides a direct discretisation without the need of interpolation (cf. \cite{Klenin-Langowski_2000}):
\begin{equation}\label{total-twist}
\Tt = 2(\Tw + \Wr),
\end{equation}
where $\Wr$ is the \emph{writhe} of the polygonal curve $\gamma$ which can be computed as a double summation
\cite[Eq. (13)]{Klenin-Langowski_2000}
and 
\[
\Tw=\dfrac{1}{2\pi} \sum_{n=0}^{N-1} \nu
\]
is the {\em total twist}. 
The twisting number $\Tt$ takes values in the integers, enforcing topological constraints
to the curve.

Recall by definition that anti-oriented extreme Kaleidocycles are 
discrete closed space curves of constant speed and constant torsion
which have the minimum odd twisting number.
Our numerical experiments suggest that the minimum is not one but three.

Let $\gamma$ be a discrete closed space curve of constant speed and constant torsion
corresponding to a Kaleidocycle.
Under any motion of the Kaleidocycle, $\Tw$ stays constant by definition. 
By \eqref{total-twist} the corresponding deformation
of the curve preserves the writhe as well.
This can equivalently be phrased in terms of the \emph{Gauss map} 
$G(\gamma): n\mapsto T_n \ (0\le n\le N-1)$.
The Gauss-Bonnet theorem tells us 
that $A + 2\pi \Tw = 0 \mod \pi$, where $A$ is the area on the sphere enclosed by $G(\gamma)$.
By \eqref{total-twist} we have $\Wr = A/2\pi \mod 1/2$.
Thus, the deformation of the closed discrete space curve considered in \S \ref{sec:motion_kaleidocycle}
induces one of the closed discrete spherical curves which preserves the enclosed area $A$.

\medskip 
Kaleidocycles can be folded from a piece of paper.  We include a development plan for the extreme
Kaleidocycle with $N=8$ so that the readers can personally make and investigate its motion.
\begin{figure}[ht]
\center
\includegraphics[height=12cm]{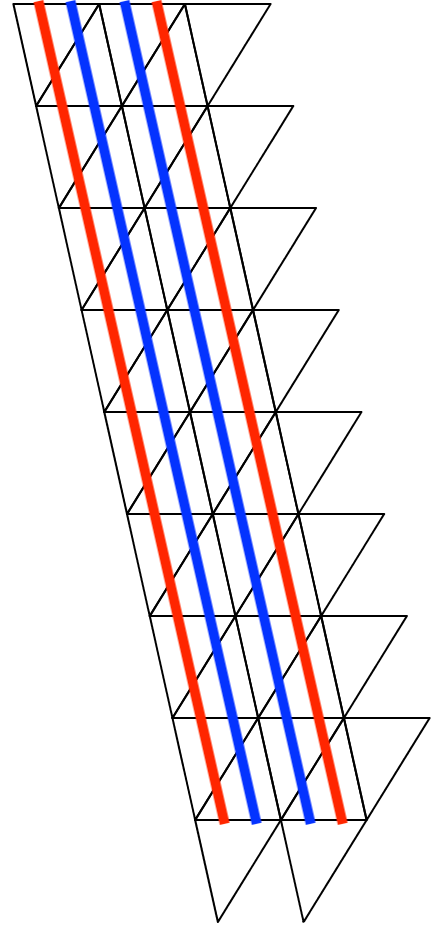}
\caption{Development plan of an extreme Kaleidocycle with eight hinges.
Black horizontal lines indicate valley folds and black slanted lines indicate mountain folds. }
\end{figure}

%%%

\begin{acknowledgments}
The first named author is partially supported by JST, PRESTO Grant Number JPMJPR16E3, Japan.
The second named author is partially supported by JSPS KAKENHI Grant Numbers JP16H03941, JP16K13763.
The last named author acknowledges the support from the ``Leading Program in Mathematics for Key Technologies" of Kyushu University. 
\end{acknowledgments}

\strut\hfill

%%%%

\end{document}